\documentclass[review,hidelinks,onefignum,onetabnum]{siamart251216}

\usepackage{lipsum}
\usepackage{amsfonts}
\usepackage{graphicx}
\usepackage{epstopdf}
\usepackage{algorithmic}
\ifpdf
  \DeclareGraphicsExtensions{.eps,.pdf,.png,.jpg}
\else
  \DeclareGraphicsExtensions{.eps}
\fi

\newsiamremark{remark}{Remark}
\newsiamremark{hypothesis}{Hypothesis}
\crefname{hypothesis}{Hypothesis}{Hypotheses}
\newsiamthm{claim}{Claim}
\newsiamremark{fact}{Fact}
\crefname{fact}{Fact}{Facts}

\title{Structure-preserving dynamical low-rank approximation for parametric elastic guided waves\thanks{Submitted to the editors DATE.
\funding{This work was supported by the French National Research Agency (ANR) under the ``Jeunes Chercheurs et Jeunes Chercheuses" program (Grant ANR-25-CE51-1125-01).}}}

\author{Dimitri Goutaudier\thanks{PIMM Laboratory, CNRS, Art et Métiers, CNAM, 151 boulevard de l'Hôpital, 75013 Paris
  (\email{dimitri.goutaudier@ensam.eu})}}

\usepackage{amsopn}

\ifpdf
\hypersetup{
  pdftitle={DLRA elastic guided waves},
  pdfauthor={D. Goutaudier}
}
\fi

\begin{document}

\maketitle

% REQUIRED
\begin{abstract}
Elastic guided waves are widely used in Structural Health Monitoring (SHM). In many-query settings, the computational cost of high-fidelity simulations motivates the use of projection-based reduced order modeling (ROM). However, the transport-dominated and dispersive nature of guided waves challenges static linear subspaces. In addition, preserving the Hamiltonian structure of the equations for energy conservation necessitates dedicated projection techniques. While the Dynamical Low Rank Approximation (DLRA) has proven effective for other wave equations, its application to elastic guided waves in SHM has remained unexplored. In this work, we introduce a structure-preserving parametric ROM framework that leverages the DLRA in an off-line/on-line strategy. During the off-line stage, a time-dependent symplectic reduced basis is constructed from training simulations. For a simplified class of parameter dependencies, we derive a closed-form solution of the nonlinear basis evolution equation. This analytical result yields a closed-form, energy-preserving reduced propagator during wave propagation, eliminating on-line time integration after the loading phase. We validate our approach on a 2D elasticity problem featuring dispersive guided waves interacting with a damage. The results demonstrate high compression ratios (rank $\sim 10-30$), low full field reconstruction errors ($\sim 10^{-3}-10^{-2}$), speedups of two to three orders of magnitude, and excellent long-time energy conservation. 
\end{abstract}

% REQUIRED
\begin{keywords}
wave propagation, Hamiltonian system, symplectic reduced order modeling, dynamic reduced basis, Structural Health Monitoring
\end{keywords}

\section{Introduction}

The goal of projection-based reduced order modeling (ROM) is to accelerate simulations \cite{Benner_SIAM_2015, hesthaven2016certified} for multi-query or real time applications, such as uncertainty quantification \cite{chen2017reduced} or parametric studies \cite{Goutaudier_GTP_2024}. This is achieved by projecting  high-dimensional equations, typically arising from the discretization of time-dependent partial differential equations, onto a low-dimensional subspace. The efficiency of a ROM strategy critically depends on the choice of the reduced subspace and the projector, which must be tailored to the mathematical structure of the underlying physical system. Indeed, physical systems exhibit a range of dynamical behaviors, such as dissipative or conservative, linear or nonlinear, transport-dominated, stochastic, etc., each posing specific challenges. \\

In this study, we focus on Hamiltonian wave equation problems, involving energy conservation and moving features with high gradients. Such systems are central in various Structural Health Monitoring (SHM) applications, where elastic guided waves are used to detect, localize and characterize damages at early stages (e.g., corrosion in metallic parts \cite{nicard2025situ}, delamination in composite materials \cite{fendzi2016general}). However, the interplay between energy preservation, moving wave fronts and dispersion (frequency-dependent wave speeds), renders conventional ROM techniques ineffective \cite{peherstorfer2020model}. For instance, the well-established POD-Galerkin method fails for these problems at two fundamental levels. First, a static linear subspace cannot efficiently track translating features over an arbitrary time window. Several methods have been proposed to address this difficulty, such as high-order manifolds \cite{barnett2022quadratic}, registration methods \cite{taddei2021space}, the shifted Proper Orthogonal Decomposition \cite{reiss2018shifted}, the time-incremental Proper Generalized Decomposition \cite{goutaudier2021proper,goutaudier2022exploring}, or dynamic reduced bases \cite{hesthaven2022rank, pagliantini2021dynamical,musharbash2020symplectic}. Second, the orthogonal projection breaks the Hamiltonian structure, leading to energy drift and loss of physical fidelity. Structure-preserving model order reduction techniques address this challenge with symplectic Galerkin projection frameworks, involving a symplectic basis and an (oblique) symplectic projector. Such a reduced basis can be obtained by approximating the solution of the so-called Proper Symplectic Decomposition (PSD) problem \cite{peng2016symplectic}, which mirrors the traditional POD problem. In particular, the complex SVD method provides the solution of the PSD problem restricted to the set of ortho-symplectic bases. Successful applications of symplectic model order reduction have been reported for mechanical engineering applications \cite{buchfink2019symplectic, peng2022data}. However, a PSD basis remains static, which limits its applicability for the transport-dominated dynamics concerning this work. \\

The Dynamical Low Rank Approximation (DLRA) offers a promising framework for Hamiltonian wave equations, as it dynamically adapts the reduced basis to track moving features \cite{koch2007dynamical}. This method seeks a low rank approximation of a large time-dependent matrix satisfying a matrix differential equation. For instance, such situation arises in this study when simultaneously solving the elastodynamic equations at multiple query parameters. The DLRA proceeds by projecting the high-dimensional flow equations onto the tangent space of matrices of fixed rank. This yields differential equations for the factors of an SVD-like decomposition of the time-dependent low-rank approximation. Over the past decades, efficient time marching schemes have been developed \cite{lubich2014projector, ceruti2022unconventional, nobile2026high} to solve these equations with desirable accuracy and robustness properties (e.g., against small singular values if the rank is over-approximated by the user). For a small enough time step, the time-discrete DLRA solution applies instantaneously the truncated SVD of the true solution \cite{feppon2018geometric}, without ever computing an expensive SVD. Symplectic formulations have also been developed to preserve the Hamiltonian structure of wave equations. To name a few, the DLRA has been successfully applied to the Schrödinger wave equation \cite{ceruti2022rank}, the shallow water equation \cite{pagliantini2021dynamical} or the 2D stochastic wave equation \cite{musharbash2020symplectic}. To the best of the author's knowledge, however, the DLRA has not yet been applied to SHM-related elastic guided wave problems. In addition, despite the impressive above-mentioned progresses, its computational cost remains hardly actionable for the real-time constraints of SHM applications. \\

In this work, we attempt to reconcile the efficiency of the off-line/on-line logic, well established for projection-based ROM techniques applied to linear dissipative systems, with Hamiltonian wave equations. The key idea is to defer the expensive computation of the time-dependent basis to an off-line phase. More precisely, a symplectic DLRA is applied to simultaneously solve the elastodynamic equations for a discrete set of training parameters $\mathcal{D}^{\text{training}} \subset \mathbb{R}^P$, where $P$ is the number of parameters of interest. In this study, the latter correspond to the diameter of the transducer exciting the structure and to the central frequency of the loading (hence $P=2$). This yields a global-in-parameter but time-dependent basis of low rank $r$ (typically $10-30$):
\begin{equation}
    U(t; \mathcal{D}^{\text{training}})
\end{equation}
satisfying three key properties at all times: it is orthonormal, symplectic, and dynamically orthogonal (meaning its time derivative is orthogonal to its range.). The on-line phase then consists in solving a cheap Hamiltonian reduced flow at a discrete query parameter set $\mathcal{D}^{\text{query}} \subset \mathbb{R}^P$. Finally, the quantities of interest are efficiently obtained with the pre-computed basis. For instance, the high-dimensional full wave field $Y \in \mathbb{R}^{N \times m}$, with $N$ the huge number of spatial grid points and $m$ the number of query parameters, is reconstructed at a user-selected time $T$ as:
\begin{equation}
    Y(T;\mathcal{D}^{\text{query}}) \approx U(T; \mathcal{D}^{\text{training}}) Y_r(T;\mathcal{D}^{\text{query}}),
\end{equation}
where $Y_r(T;\mathcal{D}^{\text{query}}) \in \mathbb{R}^{r \times m}$ is the low-dimensional matrix solution of a reduced Hamiltonian flow evaluated at the query parameters. We validate this approach on a 2D plane strain elasticity problem with dispersive guided waves, demonstrating high compression ratios, important speedups, and negligible energy fluctuation. \\ 

The paper is organized as follows.
In section \ref{sec:preliminaries}, we introduce the main concepts and notations used throughout the paper, ensuring it is self-contained for readers unfamiliar with structure-preserving model order reduction.
In section \ref{sec:complexification}, we derive a complex phase-space formulation of the semi-discretized elastodynamic equations, yielding a unitary flow. 
In section \ref{sec:challenges_guided_waves}, we discuss the challenges posed by elastic guided waves, including their dispersive nature and the computational demands for resolving such phenomena.
In section \ref{sec:symplectic_DLRA}, we apply the Dynamical Low Rank Approximation (DLRA) method to our system. We derive an analytical solution to the nonlinear left basis evolution equation in a simplified parametric setting.
In section \ref{subsec:offline_online_strategy}, we detail our off-line/on-line strategy to efficiently handle parametric studies. 
In section \ref{sec:error_analysis}, we provide an error analysis to assess the accuracy of the method.
Finally, in section \ref{sec:numerical_application} we demonstrate the numerical performances through a transducer design problem for a guided wave-based SHM application.

\section{Preliminaries}
\label{sec:preliminaries}

\subsection{Matrix manifolds}
\label{subsec:matrix_manifolds}
We denote by $\mathbb{R}^{n \times m}$ and $\mathbb{C}^{n \times m}$ the spaces of real and complex $n \times m$ matrices, respectively. The following notations are adopted for the matrix manifolds used in this work:

\begin{itemize}
	\item \textbf{Stiefel manifold}: 
    $\mathcal{V}_r(\mathbb{R}^n) = \{ X \in \mathbb{R}^{n \times r} \mid X^T X = I_r \}$, where $I_r$ is the $r \times r$ identity matrix. This is the set of orthonormal $n \times r$ matrices, referred as reduced bases in the reduced order modeling terminology, with typically $r \ll n$ in the numerical applications.
    \item \textbf{Complex Stiefel manifold}: 
    $\mathcal{V}_r(\mathbb{C}^n) = \{ X \in \mathbb{C}^{n \times r} \mid X^H X = I_r \}$, where $X^H$ is the Hermitian transpose of $X$.
    \item \textbf{Orthogonal matrices}: $\text{O}(n) = \{Q \in \mathbb{R}^{n \times n} \mid Q^T Q = Q Q^T = I_n\}$
    \item \textbf{Unitary matrices}: $\text{U}(n) = \{Q \in \mathbb{C}^{n \times n} \mid Q^H Q = QQ^H = I_n\}$
    \item \textbf{Fixed rank matrices}: $\mathcal{M}_r = \{ X \in \mathbb{K}^{n \times m}, \quad \text{rank}(X) = r \}$, with $r \leq \min(n,m)$ and $\mathbb{K} = \mathbb{R}$ or $\mathbb{C}$.
    \item \textbf{Symplectic matrices}: $\text{Sp}(2n, 2r) = \{ X \in \mathbb{R}^{2n \times 2r} \mid X^T J_{2n} X = J_{2r} \}$, where $J_{2n} \in \mathbb{R}^{2n \times 2n}$ is the canonical skew-symmetric matrix:
    \[
    J_{2n} = \begin{bmatrix}
    0 & I_n \\
    -I_n & 0
    \end{bmatrix}.
    \]
    It satisfies $J_{2n}^T = - J_{2n} = J_{2n}^{-1}$. We also define the $2r \times 2n$ symplectic pseudo-inverse of $X$:
    \[
    X^+=J_{2r}^T X^T J_{2n}.
    \]
    If $X$ is symplectic it satisfies:
    \[
    X^+X = I_{2r}.
    \]
    \item \textbf{Ortho-symplectic matrices}: $\mathbb{S}(2n, 2r) = \mathcal{V}_{2r}(\mathbb{R}^{2n}) \cap \text{Sp}(2n,2r)$. A $2n \times 2r$ ortho-symplectic matrix has the following block structure \cite{peng2016symplectic}:
    \[
    \begin{bmatrix}
    \Phi & -\Psi \\
    \Psi & \Phi
    \end{bmatrix}
    \]
    where $\Phi, \Psi \in \mathbb{R}^{n \times r}$ satisfy $\Phi^T \Phi + \Psi^T \Psi = I_r$ and $\Phi^T\Psi = \Psi^T \Phi$. It commutes with the canonical skew-symmetrix matrix:
    \[
    X J_{2r} = J_{2n} X,
    \]
    hence its symplectic pseudo-inverse reduces to its transpose:
    \[
    X^+ = X^T.
    \]
    
\end{itemize}

\subsection{Hamiltonian wave equation}
\label{subsec:hamiltonian_wave_equation} Let $y = [q_1,\cdots,q_n,p_1,\cdots,p_n] \in \mathbb{R}^{2n}$ denote a phase-space vector of the studied system, with generalized coordinates $q_i$ and generalized momenta $p_i$. Hamiltonian wave equations (discretized in space) take the following form in an appropriate choice of phase-space coordinates:
\begin{equation}
    \dot y(t) = J_{2n} \nabla_y \mathcal{H}(y(t)),
    \label{eq:general_hamiltonian}
\end{equation}
with initial conditions $y(0) = y_0$. Here, $\mathcal{H}(y)$ is the Hamiltonian functional, $\nabla_y \mathcal{H}$ is the gradient of $\mathcal{H}$, and $J_{2n}$ is the canonical skew-symmetric matrix. Such systems conserve the Hamiltonian energy, that is $\mathcal{H}(y(t)) = \mathcal{H}(y(0))$ for all $t \geq 0$. Indeed: 
\[
\frac{d\mathcal{H}}{dt}(y(t)) = \nabla_y \mathcal{H}(y(t))^T \dot{y}(t) = \nabla_y \mathcal{H}(y(t))^T J_{2n} \nabla_y \mathcal{H}(y(t)) = 0,
\]
since $J_{2n}$ is skew-symmetric.\\

In this work, we consider quadratic Hamiltonians $\mathcal{H}(y) = \frac{1}{2} y^T H y$ and a forcing term exciting spatially localized waves. This results in a linear ordinary differential equation with a non-homogeneous part of the form:
\begin{equation}
\dot y(t) = J_{2n} H y(t) + b(t)
\label{eq:linear_hamiltonian_forcing}
\end{equation}
where $H \in \mathbb{R}^{2n \times 2n}$ is symmetric ($H^T = H$).

\subsection{Symplectic time marching schemes}
\label{subsec:symplectic_time_marching}

Preserving the Hamiltonian structure (\ref{eq:general_hamiltonian}), or (\ref{eq:linear_hamiltonian_forcing}) when the loading has vanished ($b=0$), at the time discrete level requires symplectic time marching schemes. We briefly describe below the structure-preserving integrator used in this study. For further details we refer the reader to \cite{hairer2006geometric}. We denote by $\Delta t$ a uniform time step. 

\begin{itemize}
    \item \textbf{Störmer-Verlet method:}
    Without forcing and for a separable Hamiltonian of the form $\mathcal{H}(q,p)=T(p)+V(q)$, this is an explicit, second-order accurate and symplectic integrator. Including a quadrature rule on a time-dependent forcing $b(t) = [0;f(t)]$, the update rule used in this work is:
\[ 
\begin{aligned} p_{n+\frac12} &= p_n +\frac{\Delta t}{2} \left( -\nabla_qV(q_n) +f_{n} \right), \\[0.5em]
 q_{n+1} &= q_n +\Delta t \nabla_pT(p_{n+\frac12}), 
\\[0.5em] p_{n+1} &= p_{n+\frac12} +\frac{\Delta t}{2} \left( -\nabla_qV(q_{n+1}) +f_{n+1} \right). 
\end{aligned} 
\]
\end{itemize}

\subsection{Structure-preserving reduced order modeling}
\label{subsec:structure_preserving_rom}

Using a static basis, a structure-preserving ROM approximates the full-order solution \( y(t) \in \mathbb{R}^{2n} \) as:
\[
y(t) \approx A y_{2r}(t),
\]
where \( A \in \text{Sp}(2n, 2r) \) is a symplectic reduced basis and \( y_{2r}(t) \in \mathbb{R}^{2r} \) is the reduced state, with $r \ll n$.

\begin{itemize}
    \item \textbf{Proper Symplectic Decomposition (PSD):}
    Let $Y = [y(t_1),\cdots,y(t_K)] \in \mathbb{R}^{2n \times K}$ be a matrix of $K$ full-order solution snapshots at times $t_1, \cdots, t_K$. The PSD seeks to compute an optimal $2n \times 2r$ (static) symplectic basis by solving:
    \[
    \min_{A \in \text{Sp}(2n,2r)} \| Y - AA^+ Y \|_F
    \]
    where $\| . \|_F$ denotes the Frobenius norm. In contrast, the POD seeks an orthonormal bases $A \in \mathcal{V}_{2r}(\mathbb{R}^{2n})$ with the orthogonal projector $AA^T$.
    \item \textbf{Complex SVD method} \cite{peng2016symplectic}: Unlike the POD, there is no known solution to the PSD problem. There is one, however, by reducing the feasibility set to ortho-symplectic matrices $\mathbb{S}(2n,2r)$. The snapshots are first complexified as:
    \[ 
    z(t_k) = q(t_k) + i p(t_k)
    \]
    where $q(t_k), p(t_k) \in \mathbb{R}^n$ are the vectors of generalized coordinates and momenta at time $t_k$, respectively. A complex $n \times n$ snapshot matrix is defined as $Z = Q + iP$, with $Q_{i,k} = q_i(t_k)$ and $P_{i,k} = p_i(t_k)$. Then a Singular Value Decomposition (SVD) is computed as:
    \[ 
    Z = U \Sigma V^H.
    \]
    Let $U_r = U(:,1:r)$ denote the first $r$ left complex singular vectors of $Z$. Then, the ortho-symplectic solution of the PSD problem is obtained with a realification of $U_r$ defined as:
\begin{equation}
A = \begin{bmatrix}
    \Re(U_r) & -\Im(U_r) \\
    \Im(U_r) & \Re(U_r)
    \end{bmatrix},
    \label{eq:realification}
\end{equation}
where $\Re(\cdot)$ and $\Im(\cdot)$ denote the real and imaginary parts, respectively.
    \item \textbf{Symplectic Galerkin projection:}
    Let $A \in \text{Sp}(2n,2r)$ be a symplectic reduced basis. The reduced dynamics are obtained by obliquely projecting the full-order system with the symplectic projector $AA^+$:
    \[
    \dot{y}_{2r} = J_{2r} A^T H A y_{2r} + A^+ b,
    \]
    with initial conditions $y_{2r}(0) = A^+y_0$. The reduced matrix $A^T H A$ is symmetric. Whenever the forcing vanishes, the reduced flow is symplectic hence the reduced quadratic Hamiltonian $\tilde{\mathcal{H}}(y_{2r}) = \frac{1}{2} y_{2r}^T A^T H A y_{2r}$ is preserved.
\end{itemize}

\section{Complexified Hamiltonian formulation of elastodynamic equations}
\label{sec:complexification}

In this section, we derive a reformulation of the semi-discretized elastodynamic equations, tailored for a symplectic Dynamical Low-Rank Approximation (DLRA). We begin by expressing the standard second-order-in-time formulation, involving the mass and stiffness matrices, as an equivalent first-order Hamiltonian system. We then introduce a linear complexification of the phase-space variables, which transforms the real Hamiltonian system into a unitarily evolving complex-valued system. 

\subsection{Studied problem and scope}
\label{subsec:studied_problem}

We consider a conservative linear elastodynamic problem with a time-dependent spatial loading depending on $P$ parameters of interest $\mu \in \mathbb{R}^P$. After discretization in space, the semi-discretized parameterized equations write \cite{bathe2006finite}:
\begin{equation}
    M\ddot{u}(t) + Ku(t) = f(t;\mu)
    \label{eq:elastodynamics}
\end{equation}
with initial conditions $u(0) = u_0$ and $\dot{u}(0) = \dot{u}_0$. Here, $u \in \mathbb{R}^n$ denotes a vector of degrees of freedom (e.g., displacements), $M \in \mathbb{R}^{n \times n}$ is the symmetric positive definite mass matrix, and $K \in \mathbb{R}^{n \times n}$ is the symmetric positive stiffness matrix. The loading $f(t;\mu)$ is assumed of duration $T_L$  short enough to significantly excite elastic waves (this is made more precise in section \ref{sec:challenges_guided_waves}). We split the time interval in two parts: 
\[
    (0,T) = (0,T_L) \cup (T_L,T)
\]
where the first part $(0,T_L)$ is referred as the loading regime, with generally $T_L \ll T$, and the second part $(T_L,T)$ is referred as the wave propagation regime, conservative and of much larger duration. \\

Essentially, the proposed ROM framework adopts a distinct strategy for each regime. Namely, the loading regime will be tackled with a (standard) symplectic Galerkin projection method. The wave propagation regime, however, will be tackled with a symplectic DLRA formulation involving a time-dependent reduced basis. \\

Note that we only consider the influence of parameters on the forcing. This is mostly motivated by all the simplifications occurring in the DLRA framework, but also by the SHM transducer design problem studied in section \ref{sec:numerical_application}. A more general parametric dependence, involving $M(\mu)$ and $K(\mu)$, is left to future work. 

\subsection{Full-order Hamiltonian model}
\label{subsec:hamiltonian_formulation}

We consider the usual phase-space vector $y=[q;p] \in \mathbb{R}^{2n}$, with $q = u$ (generalized coordinates) and $p=M\dot{u}$ (generalized momenta). Equation (\ref{eq:elastodynamics}) is then recast in a canonical Hamiltonian form \cite{buchfink2019symplectic}:
\begin{equation}
    \dot{y}(t) = J_{2n} H y(t) + b(t;\mu),
    \label{eq:hamiltonian_formulation}
\end{equation}
with the symmetric matrix:
\begin{equation}
    H = \begin{bmatrix}
    K & 0 \\
    0 & M^{-1}
\end{bmatrix},    
\label{eq:hamiltonian_matrix}
\end{equation}
the extended forcing:
\[
b(t;\mu) = \begin{bmatrix}
0 \\
f(t;\mu)
\end{bmatrix},
\]
and the initial condition $y(0)=[u_0;M\dot{u}_0]$. In the wave propagation regime, namely for $t \geq T_L$, the system's energy is conserved and amounts to $\mathcal{H}(y(t)) = \frac{1}{2} y_L^T H y_L$, where $y_L = [u(T_L);M\dot{u}(T_L)]$ is the phase-space vector at the end of the loading regime. \\

In this study, full-order simulations (that is without reduced order modeling) are obtained by advancing in time equation (\ref{eq:hamiltonian_formulation}) with the Störmer-Verlet scheme presented in section \ref{subsec:symplectic_time_marching}. For the spatial discretization, see section \ref{subsec:discretization}, we use the Spectral Element Method (SEM) with the so-called GLL quadrature. It yields a diagonal mass matrix tailored for explicit time marching schemes.

\subsection{Complexification}
\label{subsec:complex_phase_space}

The Hamiltonian system \eqref{eq:hamiltonian_formulation} provides a natural starting point for structure-preserving model order reduction. However, the particular form of the Hamiltonian matrix \eqref{eq:hamiltonian_matrix} allows a further simplification. Specifically, the dynamics can be recast as a unitary flow in a complex phase-space. \\

%Lemma 4.3 Peng and Mohseni: complex stiefel isomorphic to ortho-symplectic matrices. According to the Lemma, symplectic galerkin projection with ortho symplectic basis in real phase-space is equivalent to standard galerkin proejction in orthonormal complex basis in complex phase-space. Motivating symplectic terminology of this work even if standard orthogonal projections used in complex phase-space.\\

Let $u(t)\in\mathbb{R}^n$ solve \eqref{eq:elastodynamics} and define the complex phase-space vector:
\[
z(t)=Au(t)+iB\dot{u}(t),
\]
where $A,B\in\mathbb{R}^{n\times n}$ are nonsingular matrices. The Hamiltonian formulation of section~\ref{subsec:hamiltonian_formulation} corresponds to $A=I_n$ and $B=M$, which yields a symplectic flow in a real phase-space. However, suitable pairs $(A,B)$ can be chosen so that the dynamics becomes unitary in the complex phase-space. The following proposition introduces a canonical choice.

\begin{proposition}[Unitary complex flow]
Assume $K$ is symmetric positive definite (no rigid body modes) and let:
\begin{equation*}
\Omega^2 = M^{-1/2} K M^{-1/2}
\end{equation*}
be the mass-normalized stiffness matrix. Then $\Omega := (\Omega^2)^{1/2}$ is well-defined, symmetric positive definite, and invertible. Define:
\begin{equation*}
A = M^{1/2}, \qquad B = \Omega^{-1} M^{1/2}.
\label{eq:A_B}
\end{equation*}

Then the evolution of $z = A u + i B \dot{u}$ decouples from its complex conjugate $\overline{z}$ and satisfies:
\begin{equation}
\dot z(t)=-i\Omega z(t)+i g(t;\mu), \quad g(t;\mu) = \Omega^{-1}M^{-1/2}f(t;\mu).
\label{eq:complex_flow}
\end{equation}
The flow is unitary in the wave propagation regime:
\begin{equation*}
\Vert z(t) \Vert = \Vert z(T_L)\Vert,
\qquad t\ge T_L.
\end{equation*}
\label{prop:canonical_complexification}
\end{proposition}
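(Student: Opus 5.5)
The plan is a direct computation in three steps: establish that $\Omega$ is well defined, differentiate $z$ and substitute \eqref{eq:elastodynamics}, then check that the homogeneous generator is skew-Hermitian.

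\textbf{Well-posedness of $\Omega$.} Since $M$ is symmetric positive definite, $M^{1/2}$ and $M^{-1/2}$ are well defined and symmetric positive definite. For $x\neq 0$,
\[
x^T\Omega^2x=(M^{-1/2}x)^TK(M^{-1/2}x)>0
\]
because $K$ is symmetric positive definite. Hence $\Omega^2$ is symmetric positive definite. Its spectral decomposition $\Omega^2=V\Lambda V^T$ has $\Lambda>0$, so $\Omega=V\Lambda^{1/2}V^T$ is symmetric positive definite and invertible, with $\Omega^{-1}=V\Lambda^{-1/2}V^T$.

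\textbf{Derivation of \eqref{eq:complex_flow}.} Differentiate $z=M^{1/2}u+i\Omega^{-1}M^{1/2}\dot u$ and use $\ddot u=M^{-1}(f-Ku)$:
\[
\dot z=M^{1/2}\dot u+i\Omega^{-1}M^{-1/2}(f-Ku).
\]
The key algebraic identity is
\[
M^{-1/2}K=M^{-1/2}KM^{-1/2}M^{1/2}=\Omega^2M^{1/2},
\]
which gives $\Omega^{-1}M^{-1/2}Ku=\Omega M^{1/2}u$. Substituting,
\[
\dot z=M^{1/2}\dot u-i\Omega M^{1/2}u+ig.
\]
It remains to check that this equals
\[
-i\Omega z+ig=-i\Omega M^{1/2}u+\Omega\Omega^{-1}M^{1/2}\dot u+ig.
\]
The two expressions agree term by term. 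The right-hand side involves only $z$, not $\overline z$, which is the claimed decoupling. The point to watch is that the imaginary unit turns $-i\cdot i=1$ on the velocity term.

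\textbf{Norm conservation.} For $t\ge T_L$ we have $f=0$, so $g=0$ and $\dot z=-i\Omega z$. Then
\[
\frac{d}{dt}\|z\|^2=2\,\Re(z^H\dot z)=2\,\Re(-i\,z^H\Omega z).
\]
Since $\Omega$ is real symmetric, $z^H\Omega z$ is real, so this derivative vanishes. Equivalently, $z(t)=e^{-i\Omega(t-T_L)}z(T_L)$, and $e^{-i\Omega s}$ is unitary because $\Omega$ is Hermitian.

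None of these steps is a real obstacle. The only delicate point is the commutation-type identity $\Omega^{-1}M^{-1/2}K=\Omega M^{1/2}$, which holds by the definition of $\Omega^2$ and not by any actual commutation of $M$ with $K$.
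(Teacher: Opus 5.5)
Your proof is correct and follows the paper's route. The paper first sets $q=M^{1/2}u$ to obtain $\ddot q+\Omega^2 q=M^{-1/2}f$, which is exactly your identity $M^{-1/2}K=\Omega^2M^{1/2}$, and then differentiates $z=q+i\Omega^{-1}\dot q$; you additionally spell out the positive definiteness of $\Omega$ and the norm conservation for $t\ge T_L$, which the paper leaves implicit.
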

\begin{proof}
Let $q=M^{1/2} u$. Then (\ref{eq:elastodynamics}) becomes $\ddot{q} + \Omega^2 q = M^{-1/2}f$. Let $p=\Omega^{-1}\dot{q}$. Then a direct computation of the time derivative of $z= q + i p$ gives the result.
\end{proof}

Some comments are in order. This complex phase-space representation is nothing more than a first-order in time formulation of the well-known modal superposition principle in structural dynamics \cite{bathe2006finite}. The square root $\Omega$ of the mass-normalized stiffness matrix can be obtained from standard solvers as follows. First, compute the mass-normalized modal basis $\Phi \in \mathbb{R}^{n\times n}$ solving the generalized eigenvalue problem:
\[
K\Phi = M \Phi \Lambda \quad \text{s.t.} \quad \Phi^T M \Phi = I_n,
\]
where $\Lambda$ is a diagonal matrix with (here strictly) positive entries. Then, using $M^T = M$ and $\Phi^{-1} = \Phi^T M$, we obtain:
\begin{equation}
    \Omega = \Psi \sqrt{\Lambda} \Psi^T \quad \text{with} \quad \Psi = M^{1/2} \Phi \quad (\Psi^T\Psi = I_n).
    \label{eq:computation_omega}
\end{equation}
In the real phase-space defined with the generalized coordinates $q= Au$ and momenta $p=B\dot{u}$, with $(A,B)$ given by the proposition above, the resulting Hamiltonian matrix takes the simpler block diagnonal form $H=\text{blkdiag}(\Omega, \Omega)$. Hence it commutes with the canonical skew-symmetric matrix, that is $J_{2n}H = H J_{2n}$, just as the left and right multiplications of $\Omega$ by the imaginary number $-i$  in (\ref{eq:complex_flow}) are trivially equal. While the real and complex representations are equivalent, we proceed with the simpler algebra in the complex phase-space and the standard properties of unitary flows. 

\subsection{Wave propagation regime}
\label{subsec:unitary_matrix_flow}
We instantiate the unitary complex flow (\ref{eq:complex_flow}) at a discrete set of $m$ parameter values $\mu \in \mathcal{D} \subset \mathbb{R}^P$. For clarity, we keep the notation $T_L$ to denote the maximum loading duration observed for $\mu \in \mathcal{D}$. In the wave propagation regime, this yields a unitary matrix flow for $t\geq T_L$:
\begin{equation}
    \dot{Z}(t) = -i \Omega Z(t),
    \label{eq:matrix_complex_flow}
\end{equation}
where $Z(t) = [z(t;\mu_1),\cdots,z(t;\mu_m)] \in \mathbb{C}^{n \times m}$ stacks the (unrelated and decoupled) solutions of the unitary complex flow (\ref{eq:complex_flow}) at each parameter value. The initial condition is:
\begin{equation}
Z(T_L) = M^{1/2} W(T_L) + i \Omega^{-1}M^{1/2}\dot{W}(T_L),
\label{eq:initial_condition_Z}
\end{equation}
with $W(T_L) \in \mathbb{R}^{n\times m}$ the stacked solutions of (\ref{eq:elastodynamics}) at the end of the loading regime (we used the notation $W$ to avoid the conflict with the left basis $U$ introduced later). \\

Equation (\ref{eq:matrix_complex_flow}) represents a continuous rotation of the initial data $Z(T_L)$ with instantaneous generator the (skew-Hermitian) matrix $i\Omega$. The flow can be solved analytically using the basic theory of linear differential equations as:
\begin{equation}
    Z(t) = e^{-i\Omega t}Z(T_L) = \Psi e^{-i \sqrt{\Lambda} \tau} \Psi^T Z(T_L), \quad t \geq T_L
    \label{eq:wave_exact_solution}
\end{equation}
with $\tau = t-T_L$, $\Psi$ and $\sqrt{\Lambda}$ from the spectral decomposition (\ref{eq:computation_omega}). \\

This paper addresses the challenge of approximating the solution (\ref{eq:wave_exact_solution}), involving a full static modal basis, with a low-dimensional, time-dependent basis tailored for elastic guided wave problems. The efficiency of the proposed approach is benchmarked in the numerical section against a standard modal truncation technique.

\section{Challenges posed by elastic guided waves}
\label{sec:challenges_guided_waves}

\subsection{Dispersion}
\label{subsec:dispersion}

Elastic waves in unbounded, homogeneous, and isotropic media admit two fundamental modes: longitudinal P-waves and transverse S-waves. However, the presence of boundaries or interfaces gives rise to a far richer family of solutions called guided wave modes. Of interest to this study are the so-called Lamb wave modes, which propagate in plates having two free opposite surfaces. Their velocities are highly frequency-dependent. Specifically, a dispersion relation establishes the condition for a Lamb wave mode to exist with a circular frequency $\omega$ and a wavenumber $k$ (\cite{su2009identification}, chapter 2):
\begin{equation}
\frac{\tanh (\beta d)}{\tanh (\alpha d)} = \left[ \frac{4\alpha \beta k^2}{(k^2 + \beta^2)^2} \right]^m
\label{eq:dispersion_relation}
\end{equation}
where $d$ is the half-thickness of the plate, $\alpha$ and $\beta$ are functions of $\omega$ and $k$, and $m = \pm 1$. The through-thickness profile is either symmetric for solutions obtained with $m=+1$, or antisymmetric for those obtained with $m=-1$. \\

In the $(\omega,k)$ plane, the solutions of the nonlinear relation (\ref{eq:dispersion_relation}) form branches, each corresponding to a distinct Lamb wave mode. Within a given frequency band, multiple branches may coexist, leading to complex interactions such as mode conversion at boundaries or interfaces. These interactions correspond to an exchange of energy between wave modes, further complicating the dynamics. 

\subsection{Discretization}
\label{subsec:discretization}

Accurately simulating such phenomena requires fine temporal and spatial resolutions to capture the highest frequencies and shortest wavelengths of interest.  Moreover, the time-marching scheme must minimize numerical dissipation and preserve energy over long times. We address these requirements in our full-order simulations by combining the Spectral Element Method (SEM) and the Störmer-Verlet scheme. Indeed, the SEM leverages high-order finite elements with established accuracy for wave propagation simulations \cite{komatitsch1999spectral}. In practice, the experience with such simulations is to use the well-known critical time step of the central difference method (CDM), and 12 spatial grid points along the shortest wavelength of interest \cite{duczek2014numerical}. Specifically, we chose:
\begin{equation}
\Delta t \leq \Delta t_\textrm{CDM} = \frac{2}{\sqrt{\rho(M^{-1}K)}}, \quad \Delta x \leq \frac{p}{12} \lambda_{\min},
\end{equation}
where $\rho(\cdot)$ denotes the spectral radius, $p$ is the polynomial order of the SEM elements, and $\lambda_{\textrm{min}}$ is the target wavelength. 

\subsection{Limits of a modal basis}
\label{subsec:limits_modal_basis}

The contrast between the complex physical behavior of guided waves and the seamingly simple analytical solution \eqref{eq:wave_exact_solution} is only apparent. It disappears as one observes that a spatially localized moving wave packet exhibits a broad frequency spectrum, activating the superposition of many global vibration modes. The spatial localization at a given time is a consequence of their interference pattern, which concentrates the wave field within a limited spatial region and attenuates it elsewhere. For example, in the numerical application of section~\ref{sec:numerical_application}, the loading's frequency spectrum overlaps with a region of $\Omega$'s spectrum where hundreds of vibration modes ($\sim500$) are active. For more complex structures (e.g., heterogeneous or anisotropic media), this number can increase significantly, rendering direct modal representations computationally intractable for large-scale simulations. \\

To overcome these challenges, we compute a Dynamical Low Rank Approximation (DLRA) of the solution. The high dimensional (static) modal subspace is replaced by a low-dimensional, but time-dependent, symplectic subspace. The spectral complexity of elastic guided waves is then retained without the prohibitive cost of a full (or at least very large) modal representation.

\section{Symplectic Dynamical Low Rank Approximation (DLRA)}
\label{sec:symplectic_DLRA}

In this section we recall the basics of the DLRA method before applying it to the linear matrix differential equation (\ref{eq:matrix_complex_flow}). We derive an analytical solution of the nonlinear left basis evolution equation, which motivates the proposed off-line/on-line reduced order modeling strategy.

\subsection{DLRA equations}
\label{subsec:DLRA_equations}

\begin{figure}[t]
    \centering
    \includegraphics[width=0.7\textwidth]{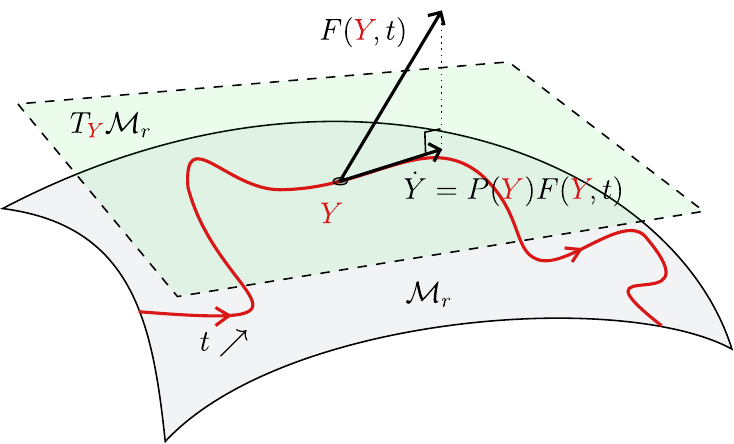}
    \caption{The DLRA advances in time a rank-$r$ approximation $Y(t)$ of the solution of a matrix differential equation $\dot{X} = F(X,t)$. It proceeds with an orthogonal projection of the field $F(Y(t),t)$ onto the tangent space of the rank-$r$ matrix manifold $\mathcal{M}_r$ at the current approximation $Y(t) \in \mathcal{M}_r$. Figure adapted from \cite{feppon2018geometric}.}
    \label{fig:DLRA}
\end{figure}

The DLRA \cite{koch2007dynamical} seeks an SVD-like factorization of the solution of a matrix differential equation $\dot{Y} = F(Y,t)$, where $Y(t)$ is a huge time-dependent $n \times m$ (real or complex) matrix and $F$ is a (linear or nonlinear) function of $Y$ and time. We proceed with the complex setting as our matrix differential equation of interest is the unitary complex flow (\ref{eq:matrix_complex_flow}). Specifically, the objective is to find a low rank $r$ approximation of the true solution, with $r \ll \min(n,m)$, of the form:
\begin{equation}
    Y(t) = U(t) S(t) V(t)^H
    \label{eq:SVD_factorization}
\end{equation}
where $U(t) \in \mathcal{V}_r(\mathbb{C}^n)$ is the (time-dependent) $n \times r$ left basis, $V(t) \in \mathcal{V}_r(\mathbb{C}^m)$ is the $m \times r$ right basis, and $S(t) \in \mathbb{C}^{r \times r}$ is nonsingular. Note that the matrix $S(t)$ is not enforced to be diagonal as in a standard SVD. \\

The factorization (\ref{eq:SVD_factorization}) defines a parameterization $(U,S,V)$ of $Y$ in the manifold of fixed rank $r$ matrices $\mathcal{M}_r$. It is not unique. Any triplet $(UQ, Q^H S Q, VQ)$, with $Q \in U(r)$ unitary, yields an equally valid parameterization. This freedom is leveraged in the method as described below. \\

A basic observation is that if we seek an approximation $Y(t) \in \mathcal{M}_r$ for all times, then its time derivative $\dot{Y}(t)$ must belong to the tangent space of $\mathcal{M}_r$ for all times as well. This writes: $\dot{Y}(t) \in T_{Y(t)} \mathcal{M}_r$. In fact, the DLRA proceeds by minimizing the distance of the residual $\dot{Y} - F(Y,t)$ to $T_Y \mathcal{M}_r$ from an initial data $Y_0 \in \mathcal{M}_r$. The gauge freedom in the SVD-like parameterization of elements of $\mathcal{M}_r$ is leveraged to define, instead, a unique parameterization in its tangent space. Specifically, the DLRA enforces the so-called dynamically-orthogonal conditions:
\begin{equation}
    U^H \dot{U} = V^H \dot{V} = 0,
    \label{eq:dynamically_orthogonal}
\end{equation}
for all times. The factors in (\ref{eq:SVD_factorization}) are determined as the solution of the following system of differential equations:  
\begin{align}
    \dot{S} &= U^H F(Y,t) V \label{eq:S_equation}\\
    \dot{U} &= P_U^{\perp} F(Y,t) S^{-1} \label{eq:U_equation}\\
    \dot{V} &= P_V^{\perp} F(Y,t)^H U S^{-H} \label{eq:V_equation}
\end{align}
with initial conditions $U_0,S_0,V_0$ obtained by performing a truncated SVD of the initial data at rank $r$. These equations are structure-preserving at the continuous time level. For instance, one observes that $\frac{d}{dt}U^H U = \dot{U}^H U + U^H \dot{U} = 0$ since $\dot{U}$ given by (\ref{eq:U_equation}) is orthogonal to the range of $U$. Thus $U(t)^H U(t) = U_0^H U_0 = I_r$ if $U_0 \in \mathcal{V}_r(\mathbb{C}^n)$. In practice, dedicated time marching schemes must be used to reliably integrate these equations. Impressive progresses have been accomplished over the past decades to derive algorithms that are robust to small singular values (e.g., if the rank $r$ is over-approximated \cite{lubich2014projector, ceruti2022unconventional}), accurate (high order schemes \cite{nobile2026high}), and energy-preserving \cite{pagliantini2021dynamical} if the studied matrix differential equation is symplectic .

\subsection{Analytical solution of the left basis}
\label{subsec:analytical_left_basis}

We now apply the DLRA to (\ref{eq:matrix_complex_flow}). First, we remark that a homogeneous linear flow is rank-preserving. As such, if the initial data $Z(T_L)$ is of (hopefully low) rank $r$:
\begin{equation}
    Z(T_L) = U_r S_r V_r^H \in \mathcal{M}_r,
    \label{eq:initial_SVD}
\end{equation}
then so is the true solution in the wave propagation regime:
\begin{equation*}
    Z(t) \in \mathcal{M}_r, \quad t \geq T_L.
\end{equation*}

Therefore, the rank of the solution is only expected to grow in the loading regime from null initial conditions, motivating the (less structured) model order reduction strategy employed in the time interval $(0,T_L)$, see section \ref{subsec:ROM_loading_regime}. \\

Next, we observe that equations (\ref{eq:S_equation}) to (\ref{eq:V_equation}) considerably simplify for the studied linear flow. For $t\geq T_L$:
\begin{align}
    \dot{S} &= -iU^H \Omega U S \label{eq:S_equation_linear}\\
    \dot{U} &= -iP_U^{\perp} \Omega U \label{eq:U_equation_linear}\\
    \dot{V} &= 0 \label{eq:V_equation_linear}
\end{align}
with the initial conditions $U(T_L) = U_r, S(T_L) = S_r$ and $V(T_L) = V_r$ from (\ref{eq:initial_SVD}). \\

In particular, the inverse of the $S$ factor in equations (\ref{eq:U_equation})-(\ref{eq:V_equation}) cancels out, and the right basis $V$ is constant in time (it would not if the flow had the more general Sylvester form $\dot{X}=AX + XB$ encountered below). The left basis equation is nonlinear as $P_U^{\perp}=I-UU^H$. The following lemma considerably simplifies the analysis.

\begin{lemma}
The matrix $U^H \Omega U$ is constant along any solution of
\eqref{eq:U_equation_linear}.
\label{lemma:invariant}
\end{lemma}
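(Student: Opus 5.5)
We plan to differentiate $G(t) := U(t)^H \Omega U(t)$ along a solution of \eqref{eq:U_equation_linear} and show that the derivative vanishes identically. By the product rule,
\[
\dot G = \dot U^H \Omega U + U^H \Omega \dot U .
\]
We then substitute $\dot U = -iP_U^{\perp}\Omega U$, using that $\Omega$ is real symmetric (hence Hermitian) by Proposition~\ref{prop:canonical_complexification} and that $P_U^{\perp} = I - UU^H$ is Hermitian. This gives $\dot U^H = i U^H \Omega P_U^{\perp}$, and therefore
\[
\dot G = i\, U^H \Omega P_U^{\perp} \Omega U - i\, U^H \Omega P_U^{\perp} \Omega U = 0 .
\]

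The only point needing care is the conjugation. Transposing $-iP_U^\perp\Omega U$ turns $-i$ into $+i$, so the two terms are the same Hermitian matrix $U^H\Omega P_U^\perp\Omega U$ with opposite signs, and they cancel. This cancellation does not need $U^HU=I_r$ along the flow, nor the dynamically orthogonal condition; it uses only the Hermitian character of $\Omega$ and of $P_U^\perp$. The latter holds for $P_U^\perp = I-UU^H$ whatever $U$ is.

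Constancy then follows because $G$ is differentiable with zero derivative on the interval of existence $t\ge T_L$. Hence $G(t) = U_r^H \Omega U_r$ for all such $t$, where $U_r$ is the initial left basis from \eqref{eq:initial_SVD}. I expect no real obstacle. The only subtlety is justifying that the solution $U$ exists and is $C^1$, which holds because the right-hand side of \eqref{eq:U_equation_linear} is polynomial in $U$, and the lemma is understood along that solution.
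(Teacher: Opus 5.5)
Your proof is correct and is the same computation as the paper's. You differentiate $U^H\Omega U$, substitute $\dot U=-iP_U^{\perp}\Omega U$, and use that $\Omega$ and $P_U^{\perp}$ are Hermitian so the two terms cancel. The only difference is cosmetic: you keep $P_U^{\perp}$ factored, so the cancellation appears in one line, whereas the paper expands $P_U^{\perp}=I-UU^H$ into four terms that cancel pairwise.
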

\begin{proof}
First, we write equation (\ref{eq:U_equation_linear}) as:
\begin{equation*}
    \dot{U} = -i \Omega U + i U(U^H \Omega U).
\end{equation*}
Then a direct calculation using $\Omega^H = \Omega$ gives:
\begin{align*}
    \frac{d}{dt} U^H \Omega U &= \dot{U}^H \Omega U + U^H \Omega \dot{U} \\
    &= (i U^H \Omega -i U^H \Omega U U^H) \Omega U + U^H \Omega (-i \Omega U + i UU^H \Omega U) \\
    &= i(U^H \Omega^2 U - U^H \Omega U U^H\Omega U - U^H \Omega^2 U  + U^H \Omega UU^H \Omega U) \\
    &= 0.
\end{align*}
\end{proof}

Following this lemma, we define the Hermitian matrix:
\begin{equation}
    \Omega_r = U_r^H \Omega U_r,
    \label{eq:Omega_r}
\end{equation}
which absorbs the nonlinearity of equation (\ref{eq:U_equation_linear}), affecting only the behavior with respect to the initial condition $U_r$. The exact solution of the nonlinear left basis evolution is given below.
\begin{theorem}[Left basis formula]
Let $U(t) \in \mathcal{V}_r(\mathbb{C}^n)$ solve (\ref{eq:U_equation_linear}) and assume the flow starts at $U(T_L)=U_r$. Let $\Omega = \Psi \sqrt{\Lambda}\, \Psi^H$ be the spectral decomposition of $\Omega$ defined in (\ref{eq:computation_omega}). Let
also $\Omega_r = \Psi_r \sqrt{\Lambda_r}\, \Psi_r^H$ be the spectral decomposition of the reduced matrix introduced in (\ref{eq:Omega_r}). Then $U(t)$ is explicitly given for $t \geq T_L$ by:
\begin{equation}
    U(t) = \Psi e^{-i \sqrt{\Lambda} \tau} \Psi^H U_r \Psi_r e^{i \sqrt{\Lambda_r} \tau} \Psi_r^H,
    \label{eq:left_basis_solution}
\end{equation}
with $\tau = t-T_L$.
\label{prop:left_basis_solution}
\end{theorem}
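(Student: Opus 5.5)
The plan is to use Lemma~\ref{lemma:invariant} to linearize the left basis equation. Along any solution of (\ref{eq:U_equation_linear}) we have $U(t)^H \Omega U(t) = U_r^H \Omega U_r = \Omega_r$ for all $t \geq T_L$. Substituting this into the rewritten form $\dot U = -i\Omega U + iU(U^H\Omega U)$ turns the nonlinear equation into the \emph{linear} Sylvester-type equation
\begin{equation*}
\dot U = -i\Omega U + i U \Omega_r, \qquad U(T_L) = U_r,
\end{equation*}
with constant coefficients. The solution of (\ref{eq:U_equation_linear}) is therefore also a solution of this linear initial value problem.

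Next I will check directly that $\widetilde U(t) := e^{-i\Omega\tau} U_r e^{i\Omega_r \tau}$, with $\tau = t - T_L$, solves the linear problem. Differentiating gives
\begin{equation*}
\dot{\widetilde U} = -i\Omega \widetilde U + \widetilde U (i\Omega_r),
\end{equation*}
and clearly $\widetilde U(T_L) = U_r$. Since the linear equation has globally Lipschitz (constant) coefficients, its solution is unique, so $U(t) = \widetilde U(t)$.

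It remains to write the two exponentials in spectral form. From (\ref{eq:computation_omega}), $\Omega = \Psi\sqrt{\Lambda}\Psi^H$ with $\Psi$ real orthogonal, so $e^{-i\Omega\tau} = \Psi e^{-i\sqrt\Lambda\tau}\Psi^H$. Likewise $\Omega_r$ is Hermitian with unitary eigenvector matrix $\Psi_r$. It is in fact positive definite, because $\Omega \succ 0$ and $U_r$ has orthonormal columns, which justifies writing its eigenvalues as $\sqrt{\Lambda_r}$. Hence $e^{i\Omega_r\tau} = \Psi_r e^{i\sqrt{\Lambda_r}\tau}\Psi_r^H$, and substituting both expressions yields (\ref{eq:left_basis_solution}).

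The main subtlety is the logical order of the linearization step. The invariance $U^H\Omega U = \Omega_r$ holds along the true solution, so that solution satisfies the linear ODE, and uniqueness for the linear ODE then identifies it with the explicit formula. I would also confirm that the two sign conventions in $\dot U = -i P_U^\perp \Omega U$ are consistent with the lemma's rewriting.

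As an optional sanity check, the formula should satisfy the structural properties. It is orthonormal, since it is a product of a unitary matrix, $U_r$, and another unitary matrix. It is dynamically orthogonal, since
\begin{equation*}
U^H\dot U = -iU^H\Omega U + i\Omega_r = 0
\end{equation*}
once $U^H\Omega U = \Omega_r$. That last identity is immediate from the formula, because $e^{i\Omega\tau}\Omega e^{-i\Omega\tau} = \Omega$ and conjugating $\Omega_r$ by $e^{i\Omega_r\tau}$ leaves it unchanged.
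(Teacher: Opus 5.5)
Your proposal is correct and follows the paper's own proof: Lemma~\ref{lemma:invariant} fixes $U^H\Omega U=\Omega_r$, which turns \eqref{eq:U_equation_linear} into the linear Sylvester equation solved by $e^{-i\Omega\tau}U_r e^{i\Omega_r\tau}$, and the spectral decompositions then give the stated formula. You also spell out two points the paper leaves implicit, namely the uniqueness argument for the linear initial value problem and the positive definiteness of $\Omega_r$ that justifies the $\sqrt{\Lambda_r}$ notation.
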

\begin{proof}
Since $\Omega_r = U^H \Omega U$ is constant by lemma \ref{lemma:invariant}, the solution of the Sylvester form equation $\dot{U} = -i \Omega U + i U\Omega_r$ for $t\geq T_L$ with initial data $U_r$, is given by $U(t) = e^{-i \Omega \tau} U_r e^{i \Omega_r \tau}$. The result follows with the spectral decompositions of $\Omega$ and $\Omega_r$.
\end{proof}

The left basis formula (\ref{eq:left_basis_solution}) plays a central role in the proposed framework. It encodes the rich wave dynamics into a $r$-dimensional, time-dependent, symplectic subspace. Geometrically, this result reveals a double rotational structure. A high-dimensional, time-dependent stiff rotation $e^{-i \Omega \tau}$ acts on the subspace spanned by $U_r$ ; and a low-dimensional rotation $e^{i \Omega_r \tau}$ dynamically moves the coordinate system within that subspace. The simplification of the studied setting, see section \ref{subsec:studied_problem}, is that only the low-dimensional part involving $U_r$ depends on the parameters. \\

We note that another time-dependent, orthonormal, rank-$r$ basis $\tilde{U}(t)$ can be obtained directly from (\ref{eq:wave_exact_solution}) and (\ref{eq:initial_SVD}) as $\tilde{U}(t) = e^{-i\Omega \tau} U_r$. This basis provides an equally valid rank-$r$ parameterization of $Z(t)$ in $\mathcal{M}_r$, illustrating the gauge freedom mentioned in section \ref{subsec:DLRA_equations}. However, it is not dynamically orthogonal ($\tilde{U}^H \dot{\tilde U} = -i \tilde{U}^H \Omega \tilde{U} \neq 0$), thus it does not strictly separate the basis motion from the reduced coefficients dynamics. In contrast, the dynamically orthogonal basis $U(t)$ satisfies the standard DLRA gauge condition and therefore provides a canonical representative among the infinitely many equivalent rank-$r$ factorizations of $Z(t)$.

\subsection{Reduced unitary flow}
\label{subsec:reduced_unitary_flow}

Since the right basis is constant in time, see equation (\ref{eq:V_equation_linear}), we define the reduced matrix:
\begin{equation*}
   Z_r(t) = S(t) V_r^H \in \mathbb{C}^{r \times m}
\end{equation*}
where $V_r \in \mathcal{V}_r(\mathbb{C}^m)$ is the initial right basis obtained from the truncated SVD of the initial data at rank $r$, see (\ref{eq:initial_SVD}). Its evolution equation is obtained by right multiplying equation (\ref{eq:S_equation_linear}) by $V_r^H$:
\begin{equation}
    \dot{Z}_r = - i \Omega_r Z_r
    \label{eq:reduced_matrix_solution}
\end{equation}
with initial condition $Z_r(T_L) = S_r V_r^H$ and $\Omega_r$ given by (\ref{eq:Omega_r}). Hence this reduced flow is unitary. Its exact solution, after spectral decomposition of $\Omega_r$ as in theorem \ref{prop:left_basis_solution}, is given for $t \geq T_L$ by:
\begin{equation}
    Z_r(t) = e^{-i \Omega_r \tau} Z_r(T_L) = \Psi_r e^{-i \sqrt{\Lambda_r} \tau} \Psi_r^H Z_r(T_L),
    \label{eq:reduced_matrix_solution}
\end{equation}

with $\tau = t-T_L$. We can check that the two factors (\ref{eq:left_basis_solution}) and (\ref{eq:reduced_matrix_solution}) of the DLRA solution recombine the solution (\ref{eq:wave_exact_solution}), provided that the initial data is of rank $r$:
\begin{align*}
    U(t)Z_r(t) &= e^{-i \Omega \tau}  U_r e^{i \Omega_r \tau}  e^{-i \Omega_r \tau} Z_r(T_L) \\
    &= e^{-i \Omega \tau}  U_r Z_r(T_L) \\
    &= e^{-i \Omega \tau} U_r S_r V_r^H \\
    &= e^{-i \Omega \tau} Z(T_L) \\
    &= Z(t).
\end{align*}

The non-trivial part was to factor the solution with a dynamically orthogonal rank-$r$ basis. The DLRA solution preserves the unitary structure in the complex phase-space, hence the symplecticity in the associated real phase-space, motivating the symplectic DLRA terminology adopted in this paper.

\section{Proposed off-line/on-line strategy}
\label{subsec:offline_online_strategy}

\begin{figure}[t]
    \centering
    \includegraphics[width=1\textwidth]{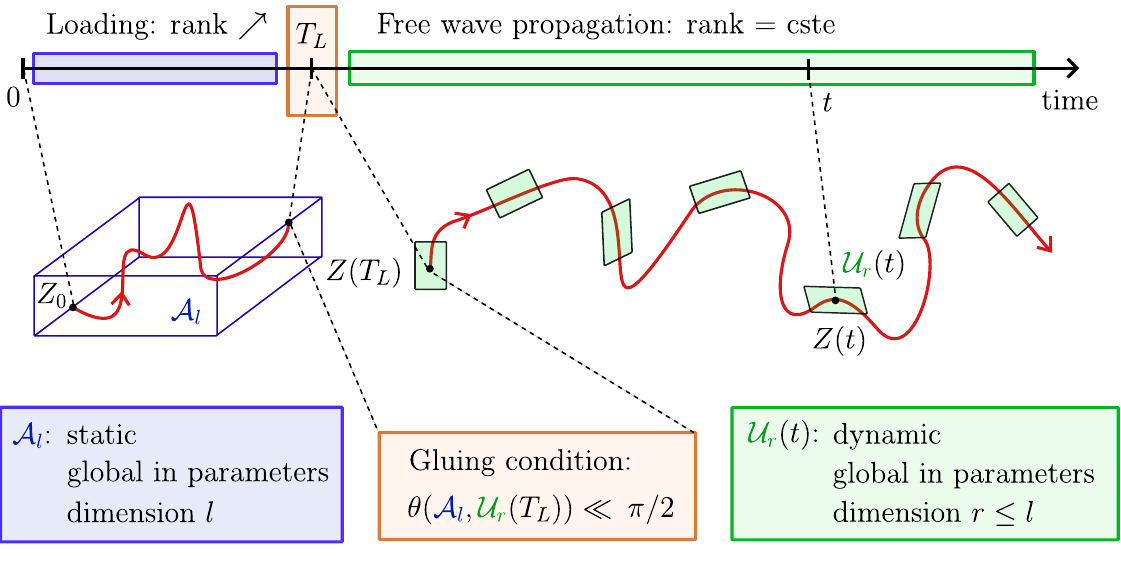}
    \caption{The complex matrix solution $Z(t) \in \mathbb{C}^{n \times m}$ is approximated with a distinct ROM method in each part of the time interval. In the loading regime, it is approximated with a symplectic Galerkin projection on a static $l$-dimensional subspace $\mathcal{A}_l$. In the wave propagation regime, it is approximated with a symplectic DLRA involving a dynamic $r$-dimensional subspace $\mathcal{U}_r(t)$. A gluing condition at the interface of the two regimes ensures consistency. }
    \label{fig:ROM_strategy}
\end{figure}

In this section, we describe our reduced order modeling framework which exploits the analytical solutions derived above to efficiently handle parametric studies. The method splits the simulation into the two distinct regimes: the loading regime $(0, T_L)$, where the solution rank grows, and the wave propagation regime $(t \geq T_L)$, where the rank remains constant. We consider a training set $\mathcal{D}^{\text{training}}$ for off-line computations and a query set $\mathcal{D}^{\text{query}}$ for on-line evaluations, with the goal of computing quantities of interest such as time series at sensor locations or full wavefield reconstructions. The off-line and on-line phases are summarized below in Algorithms~\ref{alg:offline} and \ref{alg:online}, respectively. The geometrical concepts of the proposed approach are illustrated on Figure \ref{fig:ROM_strategy}.

\subsection{Loading regime} 
\label{subsec:ROM_loading_regime}

On $(0,T_L)$, the full-order equations are:
\begin{equation}
\dot Z(t) = -i\Omega Z(t) + iG(t; \mathcal{D}^{\text{query}}),
\label{eq:FOM_loading_regime}
\end{equation}
where $G(t; \mathcal{D}^{\text{query}})$ is the forcing term matrix. The solution rank increases from null initial conditions, making a fixed-rank DLRA approximation impractical. Instead, we use the standard symplectic Galerkin method described in section \ref{subsec:structure_preserving_rom}. \\

\paragraph{Off-line phase} We prepare the reduced basis and spectral data needed for efficient on-line evaluation. Specifically, we compute a rank-$l$ basis $A_l \in \mathcal{V}_l(\mathbb{C}^n)$ with a complex SVD on the training snapshots $\{Y(t_k; \mathcal{D}^{\text{training}})\}_{t_k=0}^{T_L}$. Optionally, the time interval $(0, T_L)$ can be subsampled to reduce the computational effort. Then, we perform the spectral decomposition of the following reduced Hermitian matrix:
  \begin{equation}
      A_l^H \Omega A_l = \Psi_l \sqrt{\Lambda_l} \Psi_l^H.
      \label{eq:spectral_Omega_l_offline}
  \end{equation}

\paragraph{On-line phase} We perform a Galerkin projection of (\ref{eq:FOM_loading_regime}) onto the span of $A_l$, yielding:
  \begin{equation}
      \dot{Z}_l(t) = -i A_l^H \Omega A_l Z_l(t) + i G_l(t; \mathcal{D}^{\text{query}}), \quad t \leq T_L,
      \label{eq:ROM_loading_online}
  \end{equation}
with $G_l(t; \mathcal{D}^{\text{query}}) = A_l^H G(t; \mathcal{D}^{\text{query}})$. This equation is advanced in time in spectral coordinates as a Hamiltonian system with $(q,p)$ variables with a symplectic integrator. This yields the reduced solution $Z_l(T_L; \mathcal{D}^{\text{query}})$ at the end of the loading.

\subsection{Wave propagation} In $(T_L,T)$ the full-order equations are: 
\[
\dot{Z}(t) = - i \Omega Z(t).
\]
The solution rank is constant and dictated by the rank at $T_L$. The analytical DLRA solution is used to efficiently approximate the solution in this regime. \\

\paragraph{Off-line phase} We select a rank $r$ based on the complex SVD of the FOM matrix $Y(T_L; \mathcal{D}^{\text{training}})$ and on the gluing condition introduced later in section \ref{sec:error_analysis}. The (dominant) $r$ first left singular vectors define the initial left basis $U_r$ at $T_L$ in (\ref{eq:initial_SVD}). The analytical left basis:
\begin{equation}
U(t; \mathcal{D}^{\text{training}})
\label{eq:wave_training_basis}
\end{equation}
is then computed using (\ref{eq:left_basis_solution}). This amounts to storing $r$ full-order solutions, which can be prohibitive for large spatial grids and long time intervals. In practice, a masked basis selecting only the rows corresponding to sensor locations are extracted at user-selected times. Alternatively, if the quantity of interest is based on the wave field at a single time, then the basis needs only to be stored at this specific time. \\

\paragraph{On-line phase} The initial data for the reduced wave propagation regime are obtained by lifting $Z_l(T_L;\mathcal{D}^{\text{query}})$ with $A_l$ and projecting it onto the range of $U_r$:
\begin{equation}
Z_r(T_L;\mathcal{D}^{\text{query}}) = U_r^H A_l Z_l(T_L;\mathcal{D}^{\text{query}}).
\label{eq:wave_initial_condition}
\end{equation}
Note that the $r \times l$ matrix $U_r^H A_l$ can be pre-computed and stored. We then evolve $Z_r(t; \mathcal{D}^{\text{query}})$ analytically using (\ref{eq:reduced_matrix_solution}). The full solution is reconstructed as needed at any time $T \geq T_L$ from:
  \begin{equation}
      Z(T; \mathcal{D}^{\text{query}}) = U(T; \mathcal{D}^{\text{training}}) Z_r(T; \mathcal{D}^{\text{query}}).
      \label{eq:ROM_analytical}
  \end{equation}

\begin{algorithm}[H]
\caption{Off-line phase}
\label{alg:offline}
\begin{algorithmic}[1]
\STATE \textbf{Input:} Full-order snapshots $\{Y(t_k; \mathcal{D}^{\text{training}})\}_{t_k=0}^{T_L}$, operator $\Omega$.
\STATE \hrulefill
\STATE \textbf{Loading Regime:}
	\STATE Concatenate the snapshots $\{Y(t_k; \mathcal{D}^{\text{training}})\}_{t_k=0}^{T_L}$.
    \STATE Compute a rank-$l$ basis $A_l \in \mathcal{V}_l(\mathbb{C}^n)$ via complex SVD.
    \STATE Compute the reduced operator $\Omega_l = A_l^H \Omega A_l$.
    \STATE Perform the spectral decomposition $\Omega_l = \Psi_l \sqrt{\Lambda_l} \Psi_l^H$.
    \STATE \textbf{Output:} $A_l$, $\Psi_l$, $\sqrt{\Lambda_l}$.
    \STATE \hrulefill
\STATE \textbf{Wave Propagation Regime:}
    \STATE Compute a rank-$r$ basis $U_r \in \mathcal{V}_r(\mathbb{C}^n)$ via complex SVD of $Y(T_L; \mathcal{D}^{\text{training}})$.
    \STATE Compute the smallest singular value $\sigma_{\textrm{min}}$ of $U_r^H A_l$.
    \STATE Compute the largest principal angle between $\mathcal{U}_r$ and $\mathcal{A}_l$ as $\theta = \cos^{-1}(\sigma_{\textrm{min}})$.
    \STATE Check $\theta$ is below user tolerance or reduce rank $r$. 
    \STATE Compute the reduced operator $\Omega_r = U_r^H \Omega U_r$.
    \STATE Perform the spectral decomposition $\Omega_r = \Psi_r \sqrt{\Lambda_r} \Psi_r^H$.
    \STATE Compute the analytical left basis $U(t; \mathcal{D}^{\text{training}})$ with \eqref{eq:left_basis_solution} at query times.
    \STATE \textbf{Output:} $U_r$, $\Psi_r$, $ \sqrt{\Lambda_r}$, $U(t; \mathcal{D}^{\text{training}})$.
\end{algorithmic}
\end{algorithm}

\begin{algorithm}[H]
\caption{On-line phase}
\label{alg:online}
\begin{algorithmic}[1]
\STATE \textbf{Input:} Query set $\mathcal{D}^{\text{query}}$, $A_l$, $\Psi_l$, $\sqrt{\Lambda_l}$, $U_r$, $\Psi_r$, $ \sqrt{\Lambda_r}$, $U(t; \mathcal{D}^{\text{training}})$.
\STATE \hrulefill
\STATE \textbf{Loading Regime:}
	\STATE Build the forcing matrix $G(t; \mathcal{D}^{\text{query}})$.
    \STATE Solve the reduced flow \eqref{eq:ROM_loading_online} on $(0,T_L)$ with a structure-preserving scheme.
    \STATE \textbf{Output:} $Z_l(T_L; \mathcal{D}^{\text{query}})$.
    \STATE \hrulefill
\STATE \textbf{Wave Propagation Regime:}
    \STATE Compute the initial conditions $Z_r(T_L; \mathcal{D}^{\text{query}}) = U_r^H A_l Z_l(T_L; \mathcal{D}^{\text{query}})$.
    \STATE Propagate $Z_r(T; \mathcal{D}^{\text{query}})$ analytically using \eqref{eq:reduced_matrix_solution}.
    \STATE Reconstruct the full solution $Z(T; \mathcal{D}^{\text{query}}) = U(T; \mathcal{D}^{\text{training}}) Z_r(T; \mathcal{D}^{\text{query}})$.
    \STATE \textbf{Output:} $Z(T; \mathcal{D}^{\text{query}})$ (or other QoI).
\end{algorithmic}
\end{algorithm}

\section{Error analysis}
\label{sec:error_analysis}

We derive an error estimate by exploiting the unitary structure of the dynamics in the wave propagation regime. Let $Z(t)$ denote the exact full-order solution and:
\[
\tilde Z_l(t)=A_l Z_l(t), \quad \tilde Z_r(t)=U(t) Z_r(t)
\]
the lifted reduced solutions of the loading and wave propagation regimes, respectively. We recall from theorem \ref{prop:left_basis_solution}:
\[ 
U(t) = e^{-i \Omega \tau} U_r e^{i \Omega_r \tau}, \quad \tau = t -T_L, \quad \Omega_r = U_r^H \Omega U_r.
\] 

Let also $P_l=A_lA_l^H$ and $P_r=U_rU_r^H$ be the orthogonal projectors onto the ranges of $A_l$ and $U_r$, respectively.

\subsection{Loading regime}

On $(0,T_L)$, the equations are:
\[
\dot Z = -i\Omega Z + i G, 
\qquad
\dot{Z}_l = -i\Omega_l Z_l + i G_l,
\]
with $\Omega_l=A_l^H\Omega A_l$ and $G_l=A_l^H G$. Initial conditions are $Z(0) = Z_0$ and $Z_l(0) = A_l^H Z_0$ (possibly null). Let $E(t) = e^{-i \Omega t} - A_l e^{-i\Omega_l t} A_l^H$. The variation-of-constants formula yields for $t\leq T_L$:
\[
\|Z(t)-\tilde Z_l(t)\|
\;\le\;
 \varepsilon_{\mathrm{loading}}(t) := \|E(t) Z_0\|
+
\int_0^t \|E(t-s)G(s)\|\,ds.
\]
Hence the error is driven by the projection error on the full-order matrix exponential.

\subsection{Propagation regime}

On $(T_L,T)$, the equations are:
\[
\dot Z = -i\Omega Z,
\qquad
\dot{ Z}_r = -i\Omega_r  Z_r.
\]

Let $\tau = t-T_l$. Both full and reduced flows are unitary, implying for $t \geq T_l$:
\begin{align*}
\|Z(t)-\tilde Z_r(t)\| &= \|e^{-i\Omega \tau}Z(T_L)- U(t) Z_r(t)\| \\
& =  \| e^{-i\Omega \tau}Z(T_L) -  e^{-i\Omega \tau} U_r  e^{i\Omega_r \tau} e^{-i\Omega_r \tau} Z_r(T_L) \| \\
& = \| Z(T_L) - U_r Z_r(T_L) \| \\
& = \| Z(T_L) - U_r U_r^H A_l Z_l(T_L) \|\\
& = \| Z(T_L) - P_r \tilde Z_l(T_L) \|
\end{align*}
where we used the initial condition (\ref{eq:wave_initial_condition}). Next, we add and substract $\tilde{Z}_l(T_L) = P_l \tilde{Z}_l(T_L)$ in the relation above and use the triangle inequality. This yields:
\begin{align*}
\|Z(t)-\tilde Z_r(t)\| &\leq \|Z(T_L)-\tilde Z_l(T_L)\| +  \|(I-P_r) P_l \tilde Z_l(T_L) \| \\
& \leq \|Z(T_L)-\tilde Z_l(T_L)\| +  \|(I-P_r) P_l \| \| \tilde Z_l(T_L) \|
\end{align*} 

The operator norm $\|(I-P_r) P_l \| $ is related to the largest principal angle $\theta \in (0,\pi/2)$ between the ranges of $U_r$ and $A_l$ (see equation (13) in \cite{bjorck1973numerical}). It can be computed from the smallest singular value of $U_r^H A_l$ as described in Algorithm \ref{alg:offline}. This yields:
\begin{equation*}
\|Z(t)-\tilde Z_r(t)\| \leq \|Z(T_L)-\tilde Z_l(T_L)\| + \sin (\theta) \| \tilde Z_l(T_L) \|
\end{equation*} 

\subsection{Error estimate and gluing condition}
\label{subsec:gluing_condition}

Combining both regimes yields for $t \geq T_L$:
\begin{equation}
\|Z(t)-\tilde Z_r(t)\|
\le
\varepsilon_{\mathrm{loading}}(T_L)
+
\sin(\theta)\| \tilde Z_l(T_L)\|.
\label{eq:gluing_condition}
\end{equation}

The estimate shows that the long-time accuracy is governed  by the loading regime error and the geometric alignment between the reduced subspaces, which we refer as the gluing condition at the interface $T_L$. An important consequence is that the basis $U_r$ should not be identified independently from $A_l$. This motivates the simple procedure proposed in Algorithm \ref{alg:offline} which proved effective in our numerical experiments. 

\section{Lamb waves interacting with a damage}
\label{sec:numerical_application}

In this section, we apply our symplectic DLRA framework to a 2D plane strain elasticity problem featuring Lamb waves interacting with a damage. We investigate the influence of the rank $l$ of the static basis (used in the loading regime) and the rank $r \leq l$ of the dynamic basis (used in the wave propagation regime). Accuracy is assessed through error metrics such as full field error over time, QoI reconstruction and energy drift, while efficiency is evaluated via computational costs. A comparison with a baseline modal-based ROM using a static modal basis demonstrates the framework's advantages.

\newpage

\subsection{Numerical setup}
\label{subsec:numerical_setup}

\subsubsection{Model description}
\label{subsubsec:model_description}

\begin{figure}[t]
    \centering
    \includegraphics[width=1\textwidth]{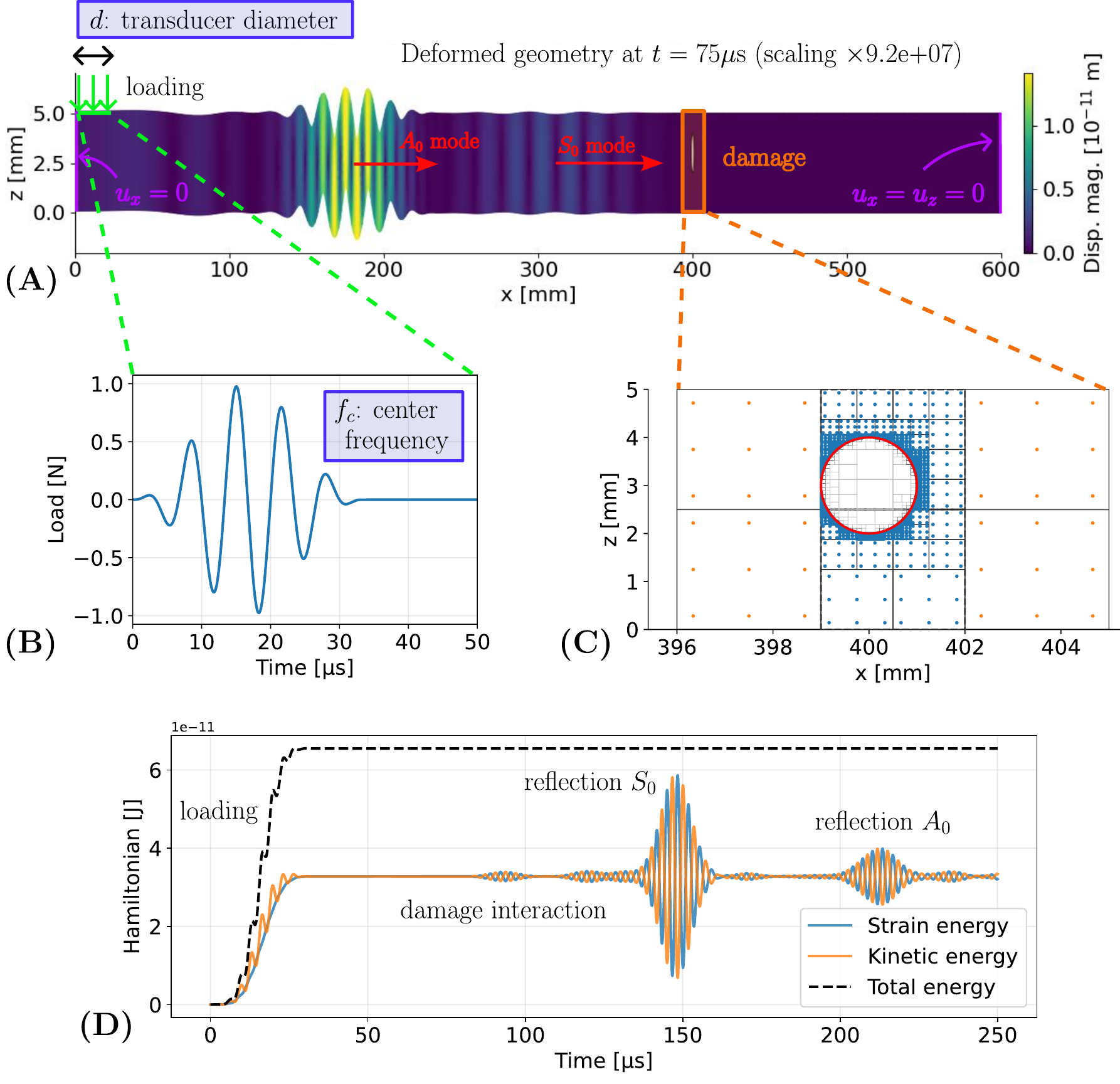}
    \caption{(A) Deformed geometry at $t=75\mu$s featuring the propagation of $A_0$ and $S_0$ Lamb wave modes. (B) Temporal profile of the applied loading. (C) Close-up of the spectral element mesh featuring cut cells around the hole. The dots represent the quadrature points. (D) Time evolution of the system's energy over time.}
    \label{fig:numerical_setup}
\end{figure}

The problem geometry is illustrated in Figure~\ref{fig:numerical_setup}-A. It consists of a rectangular aluminum plate with a piezo-transducer mounted on the top surface to excite ultrasonic Lamb waves \cite{duczek2014numerical}. A symmetry condition is enforced on the left edge ($u_x = 0$). The right edge is clamped to remove rigid body modes ($u_x = u_z = 0$). Initial conditions are null. The time-dependent amplitude of the load applied by the transducer is modeled as a sine-burst:
\begin{equation}
    F(t) = F_{\text{amp}} \sin( \omega t) \sin^2 \left(\frac{\omega t}{2n} \right) \quad \text{for} \quad t \leq \frac{n}{f_c},
    \label{eq:sine_burst}
\end{equation}
where $\omega = 2 \pi f_c$, with $f_c$ the central frequency and $n$ the number of cycles, see Figure~\ref{fig:numerical_setup}-B. A dummy total force of $1$N is applied over the diameter $d$ of the transducer. \\

The plate features a circular hole in its thickness representing an idealized damage in a SHM context. It is numerically discretized with the spectral cell method described in \cite{duczek2014numerical}. 
Specifically, the uncut cells use the standard SEM-GLL quadrature, while for cut cells a recursive quadtree
Gauss-Legendre quadrature resolves the damage geometry, see Figure \ref{fig:numerical_setup}-C. A row-sum lumped diagonal mass matrix is used to retain the explicit time-integration efficiency. No divergence behavior was observed in our numerical experiments, unlike the issues reported in \cite{duczek2014numerical}. An analysis of lumping strategies within the spectral cell method is beyond the scope of this paper. \\

Figure~\ref{fig:numerical_setup}-A shows the scaled deformed geometry at $t = 75\mu$s (before the wave-damage interaction) with the simulation parameters provided in Table \ref{tab:params} in the case $f_c = 150$kHz and $d = 3$mm. As reported in \cite{duczek2014numerical}, this configuration excites a mixture of $A_0$ and $S_0$ Lamb wave modes. Figure \ref{fig:numerical_setup}-D reports the time evolution of the system's Hamiltonian over time. The total energy increases during the loading regime, and is constant in the wave propagation regime. This confirms an accurate energy conservation at the time discrete level with the Stormer-Verlet time marching scheme.

\begin{table}
\centering
\begin{tabular}{ll}
\textbf{Parameter} & \textbf{Value} \\
\hline
Length \(L_x\) & 600 mm \\
Thickness \(L_z\) & 5 mm \\
Density \(\rho\) & 2700 kg/m\(^3\) \\
Young's modulus \(E\) & 70 GPa \\
Poisson's ratio \(\nu\) & 0.33 \\
Circular damage  & radius: 1mm, center: (400,3) mm \\
\hline 
Number of elements \(N_x\) & 200 \\
Number of elements \(N_z\) & 2 \\
Polynomial order \(p_x\) & 3 \\
Polynomial order \(p_z\) & 4 \\
Number of dofs \( n \) & 10,818 \\
\hline
Time step \(\Delta t\) & 45.2 ns ($\Delta t_{\textrm{CDM}} = 56.6$ ns)\\ 
Total duration \(T\) & 250 \(\mu\)s ($T_L = 50~\mu$s) \\
 Number of time steps \( N_t \) & 5526 (1105 + 4421) \\
 \hline
 Loading center frequency $f_c$ & [100, 200] kHz \\
Transducer diameter $d$ & [3, 12] mm \\
\hline
 Training points $\mathcal{D}^{\text{training}}$ & 100 uniform samples \\
 Test points $\mathcal{D}^{\text{query}}$ & 50 random samples ($m = 50$) \\
 \hline
\end{tabular}
\caption{Model parameters and design of experiments.}
\label{tab:params}
\end{table}

\subsubsection{Transducer design problem}
\label{subsubsec:transducer}

We focus on two design parameters of the transducer: the loading central frequency $f_c$ and the diameter $d$ of the spatial loading support. We consider the following ranges of interest:
\begin{equation*}
\mu = [f_c;d] \in [100,200]\text{kHz} \times [3,12]\text{mm}.
\end{equation*}
Accordingly, the loading regime duration for $f_c$ varying in [100, 200] kHz is defined as $T_L = n/ \min(f_c) = 50\mu\text{s}$. \\

We uniformly sample this two-dimensional parameter space with 10 points per dimension, leading to a discrete set $\mathcal{D}^{\text{training}}$ of 100 training parameters. We then randomly sample the parameter space with 50 points to build the discrete set $\mathcal{D}^{\text{query}}$. We mention here that a larger number of samples could not be used because of storage limitations on disk (1 full-order simulation $\sim$1Gb). 

\subsection{Dynamical low rank structure of the full-order model}

We first numerically demonstrate the low-rank structure of the studied problem through expensive computations. The objective is to show that the complexified FOM matrix solution $Z(t) \in \mathbb{C}^{n \times m}$, with $n = 10,818$ and $m=100$, admits a rank $r \ll \min(n,m) = 100$ for all times (again, we realize that a larger $m$ would have been more illustrative, but was impossible because of storage on disk limitations). To do so, we compute an instantaneous SVD of the complefixied FOM matrix solution for a subsampling of the available 5526 time steps. Then we compute the minimum rank $r(t)$ required to capture a prescribed threshold of the total matrix energy. Specifically, at time instance $t$ we solve:
\begin{equation}
\min_{r \in \{ 1, \cdots, 50\}} r(t) \quad \text{s.t.} \quad 1 - \frac{\sum_{i=1}^r \sigma_i^2(t)}{\sum_{i=1}^m \sigma_i^2(t)} \leq \delta
\end{equation}
for $\delta = 10^{-6}, \cdots 10^{-2}$. The results are displayed on Figure \ref{fig:SVD_ranks}-A. The rank grows (in average) during the loading regime and stabilizes after, as anticipated in section \ref{subsec:analytical_left_basis}. Notably, we remark that an instantaneous rank of 10 is enough to capture a fraction $1 - 10^{-3}$ of the total energy, suggesting a low-rank structure locally in time and globally in parameters.

\begin{figure}[t]
    \centering
    \includegraphics[width=1\textwidth]{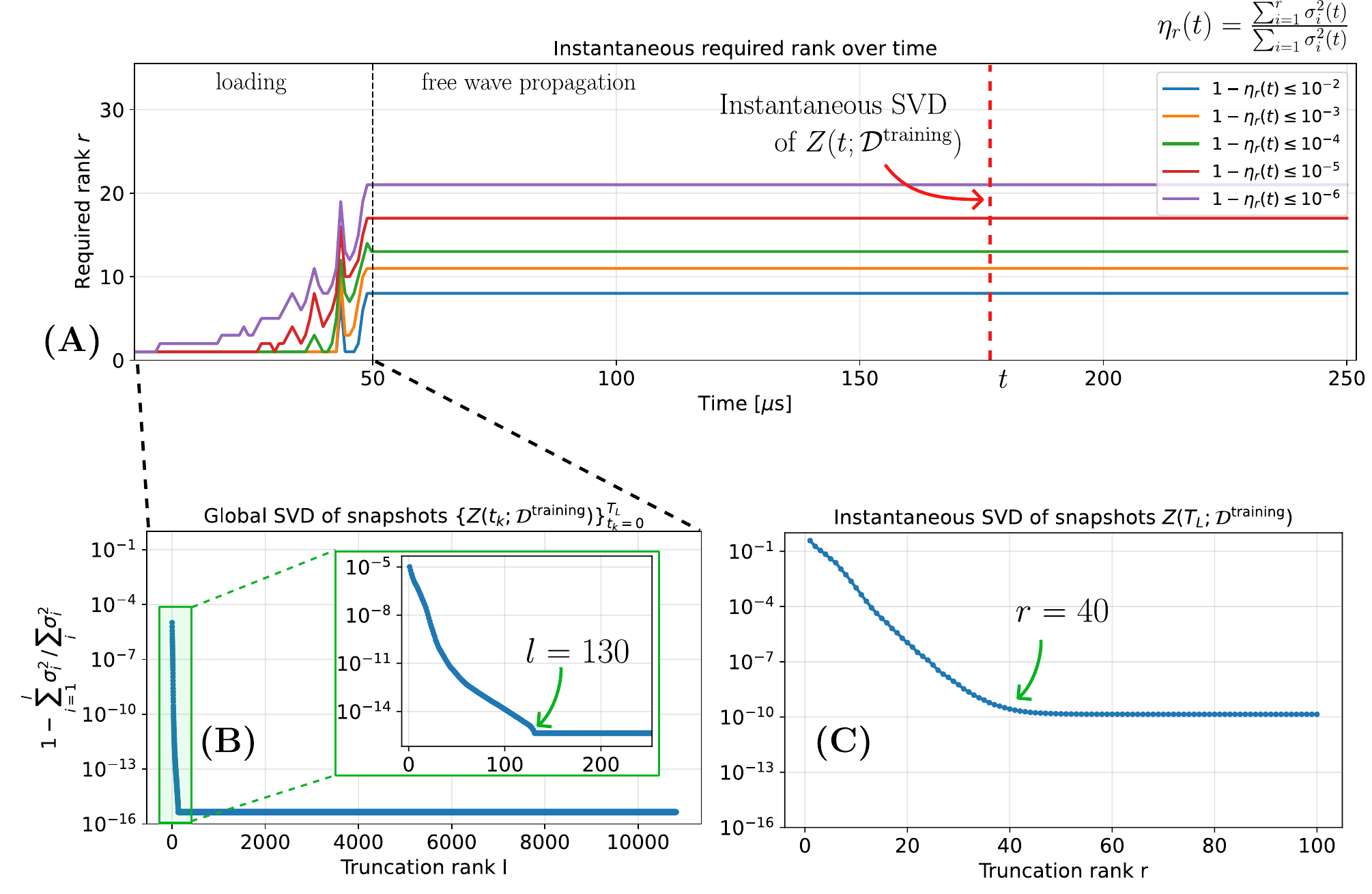}
    \caption{(A) Instantaneous truncation rank to capture a prescribed fraction of the complexified FOM matrix energy. (B) SVD of the concatenated complexified FOM matrices on the time interval $(0,T_L)$. (C) Instantaneous SVD of the complexified FOM matrix at time $T_L$.}
    \label{fig:SVD_ranks}
\end{figure}

\subsection{Rank selection and ROM construction}

The off-line phase uses the training data (100 full-order simulations) to prepare distinct reduced bases for the loading and the wave propagation regimes. The gluing condition described in section \ref{sec:error_analysis} must be respected at time $T_L$ to ensure consistency between the two bases. \\ 

\paragraph{Loading regime basis ($A_l$)} Following Algorithm \ref{alg:offline}, we compute an SVD of the concatenated complexified FOM matrix solutions on the time interval $(0,T_L)$. We use $33\%$ of the available 1105 time steps to reduce the computational effort, resulting in a concatenated matrix of dimension $10,818 \times 36,500$. Figure \ref{fig:SVD_ranks}-B displays the fraction of the unresolved energy depending on the truncation rank. The curve exhibits a corner at $l=130$, revealing that a larger truncation rank does not significantly change the approximation accuracy. This moderately large rank is used to define the complex orthonormal basis $A_l$ used in the loading regime ROM. \\

\paragraph{Initial left basis ($U_r$)} We then compute an instantaneous SVD of the $10,818 \times 100$ complexified FOM matrix solution at time $T_L$. Figure \ref{fig:SVD_ranks}-C displays the fraction of the unresolved energy depending on the truncation rank. The curve exhibits a corner at $r=40$, indicating that a greater rank does not significantly reduce the truncation error. However, the rank cannot be selected based on this information only. Indeed, the gluing condition discussed in section \ref{sec:error_analysis} must be respected. It constrains the span of $U_r$ to be closely aligned with that of $A_l$. Thus, we compute the largest principal angle between $\mathcal{U}_r$ and $\mathcal{A}_l$ as a function of the truncation rank $r$ (and $l$ to provide a more complete information), see Figure \ref{fig:angles}-A. Based on this plot, truncation ranks 10-20-30 yield close subspace alignements within less than 1deg. Arguably, $r=30$ should be selected to simultanously capture the largest fraction of the snapshot matrix energy and comply with the gluing condition. We proceed however with the a posteriori choice $r=20$, which leads to slightly better results, as suggested on Figure \ref{fig:angles}-B described below. 

\begin{figure}[t]
    \centering
    \includegraphics[width=1\textwidth]{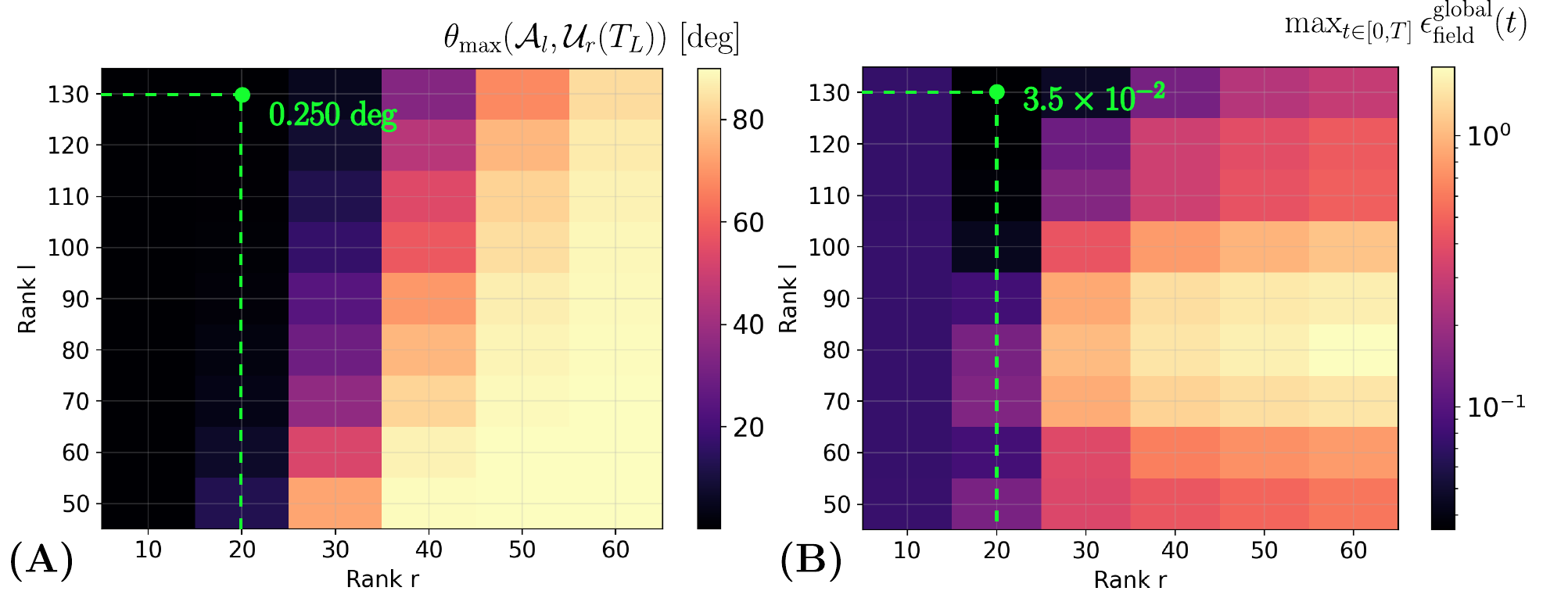}
    \caption{(A) Largest principal canonical angle between the subspaces spanned by the basis $A_l$ and $U_r$ as a function of the truncation ranks $r$ and $l$. (B) Maximum displacement field error over time as a function of the truncation ranks. The ROM uses $l=130$ and $r=20$. }
    \label{fig:angles}
\end{figure}

\subsection{Accuracy performances}

We now report the results obtained in the on-line phase, see Algorithm \ref{alg:online}, with the selected ranks $l=130$ and $r=20$. The following error metrics are used to compare the FOM and ROM solutions at the 50 test samples: \\
\begin{itemize}
\item Instantaneous displacement field reconstruction accross all query parameters:
\begin{equation*}
\epsilon_{\text{field}}^{\text{global}}(t) = \frac{\Vert U_{\text{FOM}}(t) - U_{\text{ROM}}(t) \Vert_F }{\Vert U_{\text{FOM}}(t) \Vert_F}
\end{equation*}
where $U(t) \in \mathbb{R}^{n \times m}$ is the displacement matrix solution at time $t$ for all parameters.
\item Instantaneous displacement field reconstruction at a single query parameter:
\begin{equation*}
\epsilon_{\text{field}}(t;\mu) = \frac{\Vert u_{\text{FOM}}(t;\mu) - u_{\text{ROM}}(t;\mu) \Vert }{\Vert u_{\text{FOM}}(t;\mu) \Vert}
\end{equation*}
where $u(t;\mu) \in \mathbb{R}^{n}$ is the displacement vector solution for the parameter $\mu$.
\item Time series data reconstruction at sensor location $P=(x_s,z_s)$ for a single query parameter:
\begin{equation*}
\epsilon_{\text{series}}(P,\mu) = \frac{\Vert \{ u_{\text{FOM}} (x_s,z_s,t_k;\mu) \}_{k=1}^K -  \{ u_{\text{ROM}} (x_s,z_s,t_k;\mu) \}_{k=1}^K \Vert}{\Vert \{ u_{\text{FOM}} (x_s,z_s,t_k;\mu) \}_{k=1}^K \Vert}
\end{equation*}
where $\{ u(x_s,z_s,t_k;\mu) \}_{k=1}^K \in \mathbb{R}^{K}$ is the time series displacement data at $P$ for the query parameter $\mu$.\\ 
\end{itemize}

\paragraph{Time marching} We use the same time marching scheme in the loading regime to compute the FOM and the ROM solutions with the procedure described in section \ref{subsec:ROM_loading_regime}. This removes additional discrepancies due to a mismatch between two different time integrators. \\

\paragraph{Displacement field reconstruction} Figure \ref{fig:heatmap_parameter_space}-B shows that the maximum of $\epsilon_{\text{field}}^{\text{global}}$ over time is $3.5\%$ for the selected ranks. It is reached at $t=160\mu$s, see blue curve in Figure \ref{fig:best_pair}, which essentially corresponds to the reflection of the $S_0$ wave at the right edge, where strong energy conversion occurs, see Figure \ref{fig:numerical_setup}-D. Figure \ref{fig:heatmap_parameter_space}-A then shows the displacement field accuracy per query parameters at time $t = 175\mu$s. The error is maximum ($7.49 \%$) at $f_c = 129$kHz and $d=5.5$mm, see Figure \ref{fig:heatmap_parameter_space}-B. It is minimum ($0.845 \%$) at $f_c = 165$kHz and $d=7.6$mm, see Figure \ref{fig:heatmap_parameter_space}-C. Although the deformed geometries are significantly affected by the transducer design parameters, the ROM demonstrate satisfactory accuracy (error $\sim 1 \%$) accross the whole parameter space. \\

\paragraph{Time series data reconstruction} We then consider three sensor locations with different positions of interest relative to the damage: $P_1 = (300,5)$mm (upstream), $P_2 = (500,5)$ mm (downstream) and $P_3 = (399,3)$mm (on boundary). Figure \ref{fig:sensors} shows the reconstruction over time of the $x$-component of the displacement field at the sensor locations for three randomly selected query parameters. The incident waves arrive in the expected order: $P_1$, $P_3$, then $P_2$. The error indicator $\epsilon_{\text{series}}$ is below $4 \%$ in all cases, and similarly when considering the $z$-component of the displacement field. 

\begin{figure}[H]
    \centering
    \includegraphics[width=0.9\textwidth]{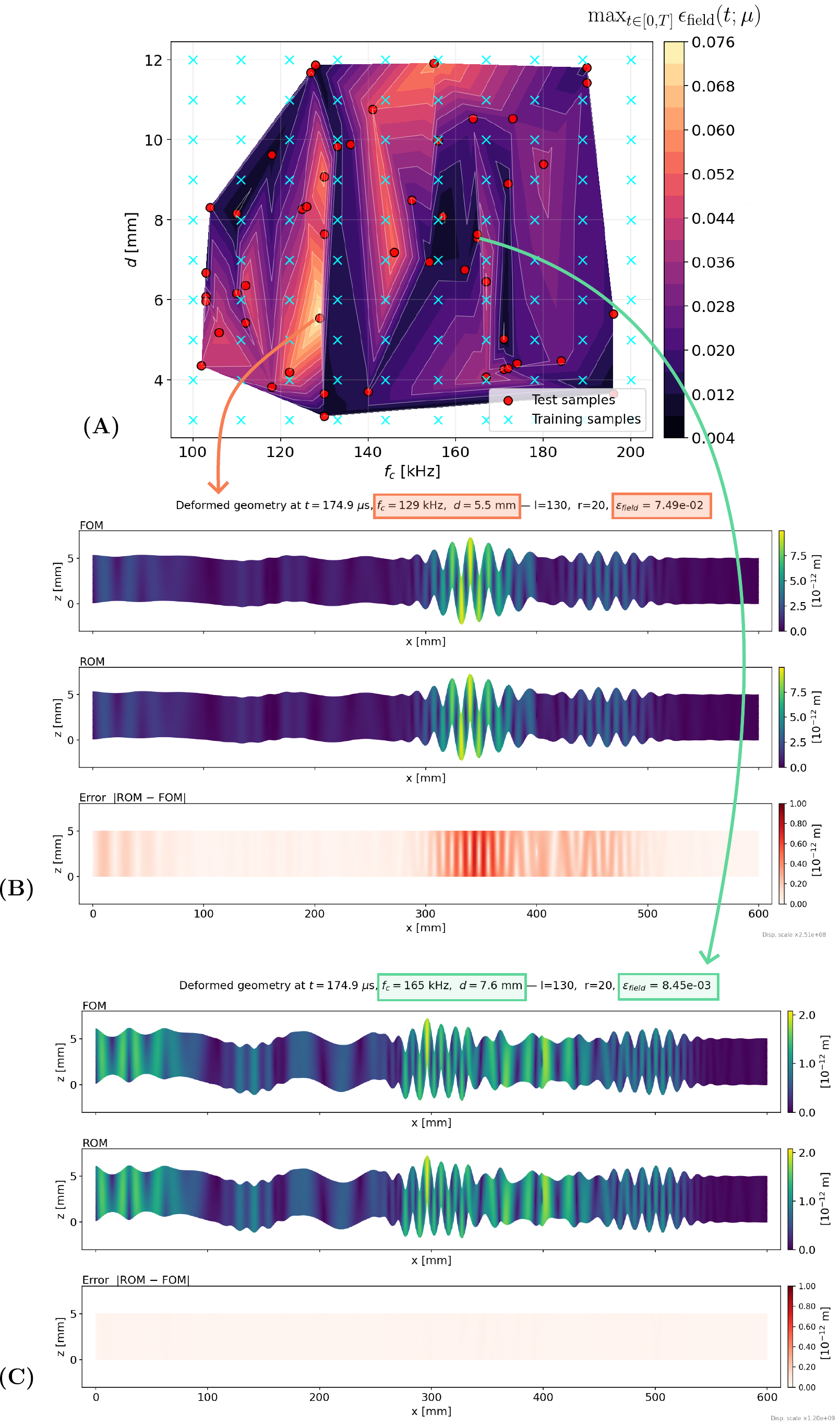}
    \caption{(A) Distribution of the displacement field error accross the parameter space. (B) ROM-FOM comparison at $f_c = 129$kHz and $d=5.5$mm. (C) Comparison at $f_c = 165$kHz and $d=7.6$mm.}
    \label{fig:heatmap_parameter_space}
\end{figure}

\begin{figure}[H]
    \centering
    \includegraphics[width=1\textwidth]{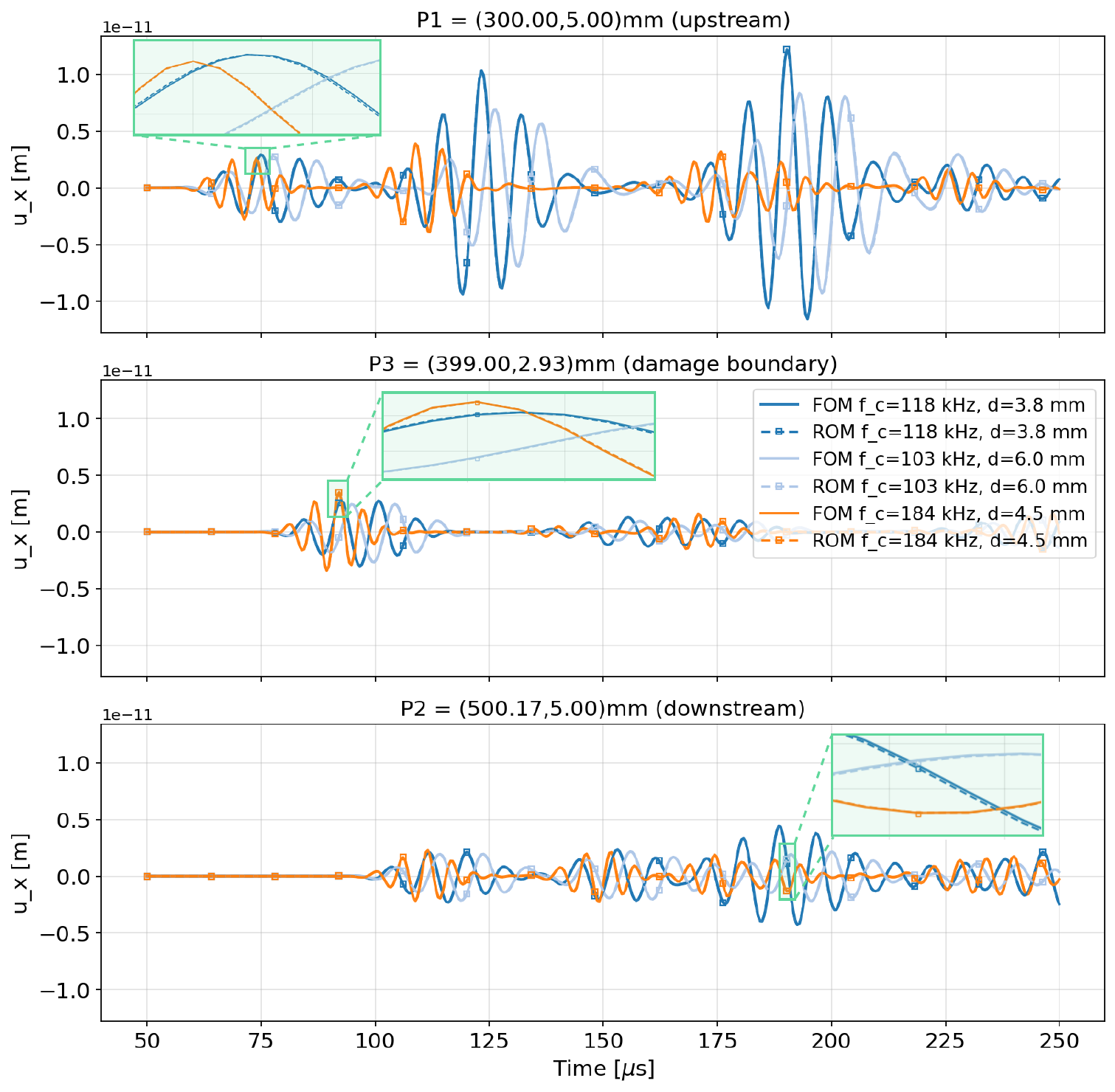}
    \caption{Time series displacement data reconstruction at sensor locations $P_1$-$P_2$-$P_3$ for three randomly selected query parameters. The error metric $\epsilon_{\text{series}}$ is below $4\%$ in all cases.}
    \label{fig:sensors}
\end{figure}

\subsection{Energy conservation and long-time stability}

We next assess the long-time stability of the ROM solution by monitoring the solution norm drift during the conservative regime ($t\geq T_L$):
\begin{equation}
\delta \mathcal{E}(t) = \frac{\vert \mathcal{E}(t) -  \mathcal{E}(T_L) \vert}{\vert \mathcal{E}(T_L) \vert}.
\end{equation}
where $\mathcal{E}(t) = \frac{1}{2}\Vert Z(t) \Vert_F^2$. The evolution of $\delta\mathcal{E}(t)$ is shown by the red curve in Figure~\ref{fig:best_pair}. The relative drift remains at the level of machine precision throughout the conservative regime, indicating that the reduced dynamics exhibit excellent long-time stability. This behavior follows directly from the use of the closed-form, structure-preserving propagator (\ref{eq:ROM_analytical}), which exactly preserves the unitary structure in the wave-propagation regime up to floating-point roundoff.

\begin{figure}[t]
    \centering
    \includegraphics[width=1\textwidth]{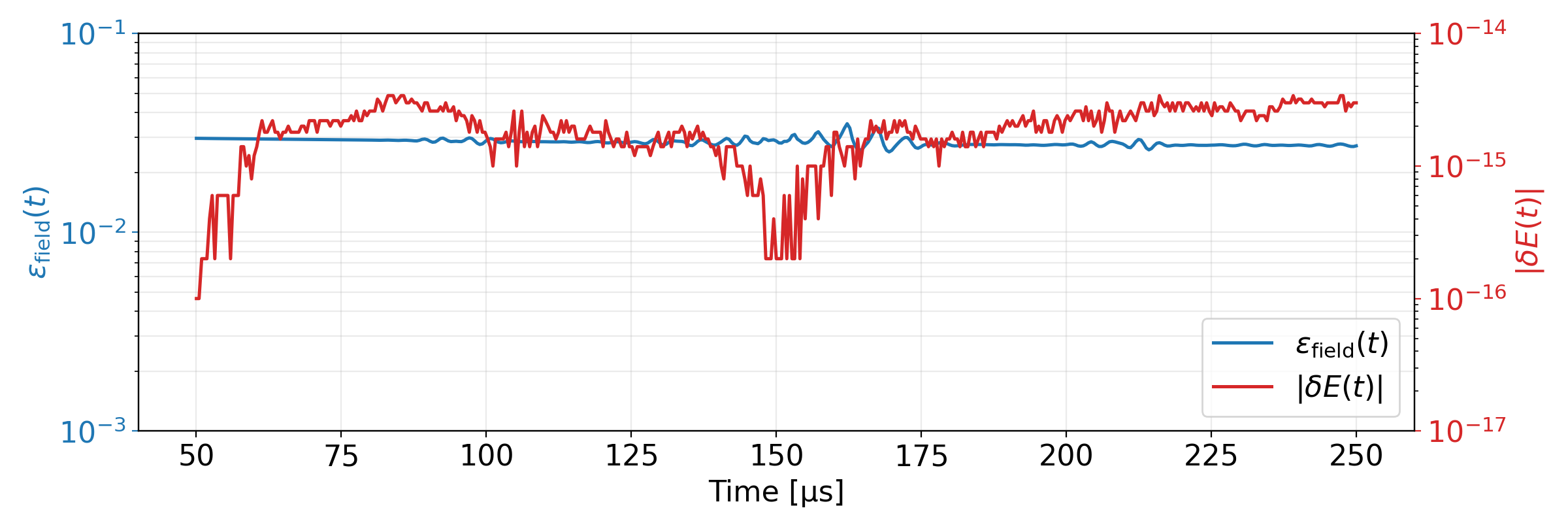}
    \caption{Blue curve: evolution of the displacement field reconstruction error over time. Red curve: ROM energy fluctuation over time.}
    \label{fig:best_pair}
\end{figure}

\subsection{Computational performance}

All the computations were performed using Python 3.14 on a single 13th Gen Intel(R) Core(TM) i7-13800H CPU (2.5~GHz) with 64~GB RAM. Wall-clock times are averaged over 20 runs. Performance is evaluated over the complete set of 50 query parameters for two representative quantities of interest (QoIs): time series displacement data reconstruction at sensor locations $P_1$-$P_2$-$P_3$, and full displacement field reconstruction at the final time $T=250\mu$s. \\ 

\paragraph{Baseline modal ROM} The DLRA ROM is compared against a classical modal ROM. A posteriori experiments indicate that approximately 500 vibration modes are required to achieve an accuracy comparable to that of the proposed method across the parameter set. The modal ROM is also divided into the loading and the wave propagation regimes. During the loading, the reduced dynamics are obtained from (\ref{eq:ROM_loading_online}) by replacing the basis $A_l$ with the truncated modal basis $\Psi(:,1:500)$, where $\Psi$ denotes the full modal basis defined in (\ref{eq:computation_omega}). During the wave propagation, the analytical propagator (\ref{eq:wave_exact_solution}) is employed using truncated matrices in place of the full spectral decomposition $\Psi$ and $\Lambda$.\\

\paragraph{QoI 1: Sensor time-history reconstruction} During the loading regime, the DLRA ROM achieves a speedup of $250\times$ relative to the FOM, demonstrating that the rank $l=130$ remains computationally efficient. This is approximately $4\times$ faster than the modal ROM owing to its substantially smaller reduced basis. In the wave-propagation regime, the closed-form approximations used in the ROMs yield speedups of $1000\times$ for the modal ROM and $3000\times$ for the DLRA ROM. Furthermore, unlike the FOM which computes the complete displacement field before extracting sensor responses, both ROMs reconstruct the sensor signals directly using masked reduced bases. This provides an additional acceleration of approximately $5\times$ for the modal ROM and $40\times$ for the DLRA-ROM. Overall, the DLRA ROM achieves a speedup of approximately $820\times$, corresponding to a fourfold improvement over the modal ROM.
\\

\paragraph{QoI 2: Displacement field reconstruction} Similarly, we consider the reconstruction of the complete displacement field at the final time $T=250\mu$s. Since the analytical propagator is evaluated only once at the requested time, both reduced models provide substantial computational savings. The modal ROM achieves an overall speedup of approximately $280\times$, whereas the proposed DLRA-ROM reaches nearly $1000\times$. This additional gain is primarily due to the significantly smaller reduced dimension ($r=20$), which lowers both the propagation and reconstruction costs. 

\begin{table}
\centering
\begin{tabular}{lccc}
 & FOM & Modal ROM ($r=500$) & DLRA ROM ($r=20$) \\
 \hline
 RHS & 0.0038 & 0.0010 & 0.0024 \\
$(0,T_L)$ & 10 & 0.15 & 0.041 \\
$(T_L,T)$ & 40 & 0.040 & 0.013 \\
QoI  & 0.19 & 0.039 & 0.0044 \\
\textbf{Total}  & \textbf{50} & \textbf{0.23 (220$\times$)} & \textbf{0.061 (820$\times$)} \\
\end{tabular}
\caption{Wall-clock time in seconds averaged over 20 runs for computing the sensor data reconstruction accross all the 50 query parameters. RHS: construction of the parameter-dependent operators.}
\end{table}

\section{Conclusions}
\label{sec:conclusions}

This work introduced a structure-preserving ROM framework for parametric simulations of elastic guided waves. The proposed off-line/on-line strategy exploits the distinct characteristics of the loading and wave propagation regimes. During the loading, the increasing complexity of the solution is efficiently captured by a symplectic Galerkin projection, whereas the constant rank after the excitation motivates the use of the DLRA method. In the simplified parametric setting considered in this work, we derived a closed-form solution of the nonlinear left basis evolution equation, leading to an analytical, structure-preserving reduced propagator during the conservative regime. Consequently, no time integration is required after the loading phase, substantially reducing the on-line computational cost. The proposed approach was validated on a two-dimensional elastodynamic problem involving dispersive guided waves interacting with a damage. The numerical experiments demonstrated that the method combines high compression ratios, accurate QoI reconstructions, long-time energy conservation, and computational speedups reaching three orders of magnitude compared with the full-order model. These results illustrate the potential of dynamical low-rank techniques for guided wave propagation problems arising in many-query SHM applications. The main limitation of the present framework lies in the simplified parametric dependence, where the parameters affect only the loading. Extending the proposed methodology to more general parameterized operators, including variations in material properties or damage configurations, constitutes a natural direction for future work. 

\bibliographystyle{siamplain}
\bibliography{references}

@article{Benner_SIAM_2015,
author = {Benner, Peter and Gugercin, Serkan and Willcox, Karen},
title = {A Survey of Projection-Based Model Reduction Methods for Parametric Dynamical Systems},
journal = {SIAM Review},
volume = {57},
number = {4},
pages = {483-531},
year = {2015}
}

@book{hesthaven2016certified,
  title={Certified reduced basis methods for parametrized partial differential equations},
  author={Hesthaven, Jan S and Rozza, Gianluigi and Stamm, Benjamin and others},
  volume={590},
  year={2016},
  publisher={Springer}
}

@article{chen2017reduced,
  title={Reduced basis methods for uncertainty quantification},
  author={Chen, Peng and Quarteroni, Alfio and Rozza, Gianluigi},
  journal={SIAM/ASA Journal on Uncertainty Quantification},
  volume={5},
  number={1},
  pages={813--869},
  year={2017},
  publisher={SIAM}
}

@article{Goutaudier_GTP_2024,
    author = {Goutaudier, Dimitri and Schiffmann, Jürg and Nobile, Fabio},
    title = {Parametric Reduced Order Model of a Gas Bearings Supported Rotor},
    journal = {Journal of Engineering for Gas Turbines and Power},
    volume = {146},
    number = {1},
    pages = {011002},
    year = {2023},
    month = {10},
    issn = {0742-4795}
}

@article{peherstorfer2020model,
  title={Model reduction for transport-dominated problems via online adaptive bases and adaptive sampling},
  author={Peherstorfer, Benjamin},
  journal={SIAM Journal on Scientific Computing},
  volume={42},
  number={5},
  pages={A2803--A2836},
  year={2020},
  publisher={SIAM}
}

@article{barnett2022quadratic,
  title={Quadratic approximation manifold for mitigating the Kolmogorov barrier in nonlinear projection-based model order reduction},
  author={Barnett, Joshua and Farhat, Charbel},
  journal={Journal of Computational Physics},
  volume={464},
  pages={111348},
  year={2022},
  publisher={Elsevier}
}

@article{reiss2018shifted,
  title={The shifted proper orthogonal decomposition: A mode decomposition for multiple transport phenomena},
  author={Reiss, Julius and Schulze, Philipp and Sesterhenn, J{\"o}rn and Mehrmann, Volker},
  journal={SIAM Journal on Scientific Computing},
  volume={40},
  number={3},
  pages={A1322--A1344},
  year={2018},
  publisher={SIAM}
}

@article{goutaudier2021proper,
  title={{Proper Generalized Decomposition with time adaptive space separation for transient wave propagation problems in separable domains}},
  author={Goutaudier, Dimitri and Berthe, Laurent and Chinesta, Francisco},
  journal={Computer Methods in Applied Mechanics and Engineering},
  volume={380},
  pages={113755},
  year={2021},
  publisher={Elsevier}
}

@article{goutaudier2022exploring,
  title={{Exploring space separation techniques for 3D elastic waves simulations}},
  author={Goutaudier, Dimitri and Berthe, Laurent and Chinesta, Francisco},
  journal={Computational Mechanics},
  volume={69},
  number={5},
  pages={1147--1163},
  year={2022},
  publisher={Springer}
}

@article{nicard2025situ,
  title={In-situ monitoring of $\mu$m-sized electrochemically generated corrosion pits using Lamb waves managed by a sparse array of piezoelectric transducers},
  author={Nicard, Cyril and R{\'e}billat, Marc and Devos, Olivier and El May, Mohamed and Letellier, Frederic and Dubent, S{\'e}bastien and Thomachot, M and Fournier, M and Masse, P and Mechbal, Nazih},
  journal={Ultrasonics},
  volume={147},
  pages={107527},
  year={2025},
  publisher={Elsevier}
}

@article{fendzi2016general,
  title={A general Bayesian framework for ellipse-based and hyperbola-based damage localization in anisotropic composite plates},
  author={Fendzi, Claude and Mechbal, Nazih and Rebillat, Marc and Guskov, Mikhail and Coffignal, G},
  journal={Journal of Intelligent Material Systems and Structures},
  volume={27},
  number={3},
  pages={350--374},
  year={2016},
  publisher={SAGE Publications Sage UK: London, England}
}

@article{taddei2021space,
  title={Space-time registration-based model reduction of parameterized one-dimensional hyperbolic PDEs},
  author={Taddei, Tommaso and Zhang, Lei},
  journal={ESAIM: Mathematical Modelling and Numerical Analysis},
  volume={55},
  number={1},
  pages={99--130},
  year={2021},
  publisher={EDP Sciences}
}

@article{hesthaven2022rank,
  title={Rank-adaptive structure-preserving model order reduction of Hamiltonian systems},
  author={Hesthaven, Jan S and Pagliantini, Cecilia and Ripamonti, Nicol{\`o}},
  journal={ESAIM: Mathematical Modelling and Numerical Analysis},
  volume={56},
  number={2},
  pages={617--650},
  year={2022},
  publisher={EDP Sciences}
}

@article{pagliantini2021dynamical,
  title={Dynamical reduced basis methods for Hamiltonian systems},
  author={Pagliantini, Cecilia},
  journal={Numerische Mathematik},
  volume={148},
  number={2},
  pages={409--448},
  year={2021},
  publisher={Springer}
}

@article{musharbash2020symplectic,
  title={Symplectic dynamical low rank approximation of wave equations with random parameters},
  author={Musharbash, Eleonora and Nobile, Fabio and Vidli{\v{c}}kov{\'a}, Eva},
  journal={BIT Numerical Mathematics},
  volume={60},
  number={4},
  pages={1153--1201},
  year={2020},
  publisher={Springer}
}

@article{peng2016symplectic,
  title={Symplectic model reduction of Hamiltonian systems},
  author={Peng, Liqian and Mohseni, Kamran},
  journal={SIAM Journal on Scientific Computing},
  volume={38},
  number={1},
  pages={A1--A27},
  year={2016},
  publisher={SIAM}
}

@article{buchfink2019symplectic,
  title={Symplectic model order reduction with non-orthonormal bases},
  author={Buchfink, Patrick and Bhatt, Ashish and Haasdonk, Bernard},
  journal={Mathematical and Computational Applications},
  volume={24},
  number={2},
  pages={43},
  year={2019},
  publisher={MDPI}
}

@article{peng2022data,
  title={Data-driven model order reduction with proper symplectic decomposition for flexible multibody system},
  author={Peng, Haijun and Song, Ningning and Kan, Ziyun},
  journal={Nonlinear Dynamics},
  volume={107},
  number={1},
  pages={173--203},
  year={2022},
  publisher={Springer}
}

@article{koch2007dynamical,
  title={Dynamical low-rank approximation},
  author={Koch, Othmar and Lubich, Christian},
  journal={SIAM Journal on Matrix Analysis and Applications},
  volume={29},
  number={2},
  pages={434--454},
  year={2007},
  publisher={SIAM}
}

@article{feppon2018geometric,
  title={A geometric approach to dynamical model order reduction},
  author={Feppon, Florian and Lermusiaux, Pierre FJ},
  journal={SIAM Journal on Matrix Analysis and Applications},
  volume={39},
  number={1},
  pages={510--538},
  year={2018},
  publisher={SIAM}
}

@article{lubich2014projector,
  title={A projector-splitting integrator for dynamical low-rank approximation},
  author={Lubich, Christian and Oseledets, Ivan V},
  journal={BIT Numerical Mathematics},
  volume={54},
  number={1},
  pages={171--188},
  year={2014},
  publisher={Springer}
}

@article{ceruti2022unconventional,
  title={An unconventional robust integrator for dynamical low-rank approximation},
  author={Ceruti, Gianluca and Lubich, Christian},
  journal={BIT Numerical Mathematics},
  volume={62},
  number={1},
  pages={23--44},
  year={2022},
  publisher={Springer}
}

@article{nobile2026high,
  title={High-Order BUG Dynamical Low-Rank Integrators Based on Explicit Runge--Kutta Methods},
  author={Nobile, Fabio and Riffaud, S{\'e}bastien},
  journal={Journal of Scientific Computing},
  volume={107},
  number={3},
  pages={102},
  year={2026},
  publisher={Springer}
}

@article{ceruti2022rank,
  title={A rank-adaptive robust integrator for dynamical low-rank approximation},
  author={Ceruti, Gianluca and Kusch, Jonas and Lubich, Christian},
  journal={BIT Numerical Mathematics},
  volume={62},
  number={4},
  pages={1149--1174},
  year={2022},
  publisher={Springer}
}

@article{hairer2006geometric,
  title={Geometric numerical integration},
  author={Hairer, Ernst and Hochbruck, Marlis and Iserles, Arieh and Lubich, Christian},
  journal={Oberwolfach Reports},
  volume={3},
  number={1},
  pages={805--882},
  year={2006}
}

@book{bathe2006finite,
  title={Finite element procedures},
  author={Bathe, Klaus-J{\"u}rgen},
  year={2006},
  publisher={Klaus-Jurgen Bathe}
}

@book{su2009identification,
  title={Identification of damage using Lamb waves: from fundamentals to applications},
  author={Su, Zhongqing and Ye, Lin},
  year={2009},
  publisher={Springer Science \& Business Media}
}

@article{komatitsch1999spectral,
  title={The spectral element method for elastic wave equations—application to 2-D and 3-D seismic problems},
  author={Komatitsch, Dimitri and Vilotte, Jean-Pierre and Vai, Rossana and Castillo-Covarrubias, Jos{\'e} M and S{\'a}nchez-Sesma, Francisco J},
  journal={International Journal for numerical methods in engineering},
  volume={45},
  number={9},
  pages={1139--1164},
  year={1999},
  publisher={Wiley Online Library}
}

@article{duczek2014numerical,
  title={Numerical analysis of Lamb waves using the finite and spectral cell methods},
  author={Duczek, Sascha and Joulaian, M and D{\"u}ster, A and Gabbert, U},
  journal={International Journal for Numerical Methods in Engineering},
  volume={99},
  number={1},
  pages={26--53},
  year={2014},
  publisher={Wiley Online Library}
}

@article{bjorck1973numerical,
  title={Numerical methods for computing angles between linear subspaces},
  author={Bj{\"o}rck, Ake and Golub, Gene H},
  journal={Mathematics of computation},
  volume={27},
  number={123},
  pages={579--594},
  year={1973}
}
\end{document}